\crefname{section}{§}{§§}
\Crefname{section}{§}{§§}
\newtheorem*{intuition*}{Key Intuition}
\newtheorem{lemma}{Lemma}
\newtheorem{definition}{Definition}
\newtheorem{theorem}{Theorem}
\begin{document}

\newcommand{\highlight}[2]{\colorbox{#1!17}{#2}}
\newcommand{\highlightdark}[2]{\colorbox{#1!47}{#2}}

\newcommand{\lap}{\mathrm{Lap}}
\newcommand{\pr}{\mathrm{Pr}}

\newcommand{\Tset}{\mathcal{T}}
\newcommand{\Dset}{\mathcal{D}}
\newcommand{\Rbound}{\widetilde{\mathcal{R}}}

\newcommand{\scalemath}[2]{\scalebox{#1}{\mbox{\ensuremath{\displaystyle #2}}}}

\makeatletter
\lstdefinelanguage{aara}{%
  basicstyle={},
  morekeywords=[1]{let,in,rec,match,with,type,of,if,then,else},
  morekeywords=[2]{},
  morekeywords=[3]{},
  morekeywords=[4]{},
  columns=[l]fullflexible,
  texcl=true,
  mathescape=true,
  identifierstyle={\sffamily},
  keywordstyle=[1]{\sffamily\color{blue}},
  keywordstyle=[2]{\sffamily\color{blue}},
  keywordstyle=[3]{\color{red}},
  keywordstyle=[4]{\rmfamily\itshape},
  showspaces=false,
  literate={->}{$\rightarrow\,$}{1}
           {|}{|\hspace{0.4em}}{1}
           {...}{\ldots}{1},
  breaklines=false %
}
\makeatother

\newcommand{\bnfdef}{\Coloneqq}
\newcommand{\bnfalt}{\;\mid\;\;}

\newcommand{\recty}[2]{\mu {#1} .\, {#2}}
\newcommand{\unitty}{\mathbf{1}}
\newcommand{\prodty}[2]{#1 \times #2}
\newcommand{\sumty}[2]{#1 + #2}
\newcommand{\funty}[4]{\langle {#1} \to {#2}, {#3}, {#4} \rangle}
\newcommand{\topty}{\top_{\mu}}

\newcommand{\unitex}{\mathsf{tt}}
\newcommand{\pairex}[2]{\mathsf{pair}({#1};\, {#2})}
\newcommand{\inlex}[1]{\mathsf{inl}\, {#1}}
\newcommand{\inrex}[1]{\mathsf{inr}\, {#1}}
\newcommand{\inex}[2]{\mathsf{in}_{#1}\, {#2}}
\newcommand{\closex}[4]{\mathsf{C}({#1}, {#2},{#3}.\, {#4})}
\newcommand{\foldex}[1]{\mathsf{fold}\, {#1}}
\newcommand{\unfoldex}[1]{\mathsf{unfold}\, {#1}}
\newcommand{\funex}[3]{\mathsf{fun}({#1}, {#2}.\, {#3})}
\newcommand{\tickex}[1]{\mathsf{tick}\{{#1}\}}
\newcommand{\letex}[3]{\mathsf{let}({#1};\, {#2}.\, {#3})}
\newcommand{\casesex}[5]{\mathsf{cases}({#1};\, {#2}.\, {#3};\, {#4}.\, {#5})}
\newcommand{\casepex}[4]{\mathsf{casep}({#1};\, {#2}, {#3}.\, {#4})}
\newcommand{\appex}[2]{\mathsf{app}({#1};\, {#2})}

\newcommand{\lastidx}{\mathsf{end}}
\newcommand{\constidx}{\lambda}
\newcommand{\lamidx}{\mathsf{lam}}
\newcommand{\inidx}[1]{\mathsf{in}\, {#1}}
\newcommand{\remidx}[1]{\mathsf{rem}\, {#1}}
\newcommand{\outidx}[1]{\mathsf{out}\, {#1}}

\newcommand{\constidxs}[1]{\mathcal{C} ({#1})}
\newcommand{\mkInd}[3]{\mathcal{M}\{{#1}.{#2}\}({#3})}
\newcommand{\pot}[3]{\phi_{#2}({#3} : {#1})}
\newcommand{\Pot}[3]{\Phi ( #3 : \langle #1 ; #2 \rangle )}

\newcommand{\proj}[2]{\pi_{#1}}
\newcommand{\shift}{\lhd}
\newcommand{\annotshare}{\mathbin{\curlyveedownarrow}}
\newcommand{\share}[2]{\prescript{#1}{}{\curlyveedownarrow}^{#2}_{#2}}
\newcommand{\shareext}[3]{\prescript{#1}{}{\curlyveedownarrow}^{#2}_{#3}}

\newcommand{\costfree}{\mathsf{cf}}
\newcommand{\costpaid}{\mathsf{cp}}

\newcommand{\Qnn}{\mathbb{Q}_{\geq 0}}

\newcommand{\judgement}[2]{\textbf{#1} \hfill \fbox{#2}}
\newcommand{\singlestep}[4]{({#1}, {#2}) \mapsto ({#3}, {#4})}
\newcommand{\multistep}[4]{({#1}, {#2}) \mapsto^* ({#3}, {#4})}

\newcommand{\hasty}[6]{{#1} ; {#2} \vdash_{#3} {#4}:{#5} ; {#6}}
\newcommand{\subtype}{<:}
\newcommand{\Idx}{\mathcal{I}}
\newcommand{\defd}{\triangleq}
\newcommand{\idxl}{\llparenthesis}
\newcommand{\idxr}{\rrparenthesis}
\newcommand{\idxext}{\cup}


\lstset{language=aara,basicstyle=\small}

\title{Automatic Amortized Resource Analysis with Regular Recursive Types}

\author{\IEEEauthorblockN{Jessie Grosen}
\IEEEauthorblockA{jgrosen@cs.cmu.edu\\
Carnegie Mellon University
}
\and
\IEEEauthorblockN{David M. Kahn}
\IEEEauthorblockA{davidkah@andrew.cmu.edu\\
Carnegie Mellon University
}
\and
\IEEEauthorblockN{Jan Hoffmann}
\IEEEauthorblockA{jhoffmann@cmu.edu\\
Carnegie Mellon University
}}

\IEEEoverridecommandlockouts
\IEEEpubid{\makebox[\columnwidth]{979-8-3503-3587-3/23/\$31.00~
\copyright2023 IEEE \hfill} \hspace{\columnsep}\makebox[\columnwidth]{ }}




\maketitle

\begin{abstract}

    The goal of automatic resource bound analysis is to statically
    infer symbolic bounds on the resource consumption of the
    evaluation of a program.
    A longstanding challenge for automatic resource analysis is
    the inference of bounds that are functions of complex custom data
    structures.
    This article builds on type-based automatic amortized resource
    analysis (AARA) to address this challenge.
    AARA is based on the potential method of amortized analysis and
    reduces bound inference to standard type inference with additional
    linear constraint solving, even when deriving non-linear bounds.
    A key component of AARA are resource functions that generate the
    space of possible bounds for values of a given type while enjoying
    necessary closure properties.

    Existing work on AARA defined such functions for many
    data structures such as lists of lists but the question of whether
    such functions exist for arbitrary data structures remained open.
    This work answers this questions positively by uniformly
    constructing resource polynomials for algebraic data
    structures defined by regular recursive types.
    These functions are a generalization of all previously proposed
    polynomial resource functions and can be seen as a general notion
    of polynomials for values of a given recursive type.
    A resource type system for FPC, a core language with recursive
    types, demonstrates how resource polynomials can be integrated
    with AARA while preserving all benefits of past techniques.
    The article also proposes the use of new techniques useful for
    stating the rules of this type system succinctly and proving it
    sound against a small-step cost semantics.
    First, multivariate potential annotations are stated in terms of
    free semimodules, substantially abstracting details of the
    presentation of annotations and the proofs of their properties.
    Second, a logical relation giving semantic meaning to resource
    types enables a proof of soundness by a single induction on
    typing derivations.
%
%
%
%
\end{abstract}

\section{Introduction}
\label{sec:introduction}

Programming language support for statically deriving resource (or
cost) bounds has been extensively studied.
Existing techniques encompass manual and automatic resource analyses
and are based on type
systems~\cite{dal2011linear,dal2013geometry,rajani2021unifying},
deriving and solving recurrence
relations~\cite{albert2007costa,kavvos2019recurrence,cutler2020denotational},
or other static analyses~\cite{gulwani2009speed,avanzini2015analysing,chatterjee2019non}.
They can derive (worst-case) upper bounds~\cite{wang2017timl,kincaid2017compositional} (best-case) lower
bounds~\cite{AlbertGM13,NgoDFH16}, and relational bounds on the difference of the cost of
two programs~\cite{radivcek2017monadic}, considering resources like time or memory.

Most automatic techniques focus on bounds that are functions of
integers or sizes of simple data structures like lists of integers.
One exception is automatic amortized resource analysis (AARA)~\cite{hofmann2003static,jost2010static,hoffmann2011multivariate},
which can automatically derive
bounds for complex data structures like lists of lists, taking
into account the individual lengths of inner lists.
As an example, consider the function \lstinline{sort_lefts_list},
which extracts only the left injections from its input list and sorts
the result.
Assume we are interested in the number of cons cells that are created
during the evaluation.
\begin{lstlisting}[xleftmargin=.4in]
let sort_lefts_list (l : (int + bool) list) =
  quicksort (filter_map find_left l)
\end{lstlisting}

RaML~\cite{hoffmann2017towards}, an implementation of 
AARA, is able to derive the exact worst-case bound 
of \(n^2 + n\) cons cell creations where \(n\) is only 
the number of left injections in the list. This small 
example highlights several key qualities of AARA: it is 
able to tightly analyze tricky recursion patterns, like 
those that appear in quicksort; it is compositional, 
easily handling interprocedural code; it produces exact, 
not asymptotic, polynomial bounds; and it can derive 
bounds on functions over tree-like data structures that 
take into account the shape of the data.

AARA for functional programs is based on a type system and type
derivations serve as proof certificates for the derived bounds.
Type inference is reduced to efficient linear programming and AARA
naturally derives bounds on the high-water mark resource use of
non-monotone resources like memory which can become available during
the evaluation.
The key innovation that enables inference of non-linear bounds with
linear programming is the use of 
a carefully selected set of resource functions that
serve as templates for the potential 
functions used in the physicist's method of amortized analysis.

Despite its benefits, state-of-the-art AARA still has some
limitations to its real-world applicability, including its lack of support for general, regular recursive types.\footnote{We use the term \emph{regular recursive types} to refer to types that may contain non-trivial nested recursion, but where all recursion occurs at base kind. 
} As an example, examine the function \lstinline{sort_lefts_tree}, similar to the function above but with lists swapped for \emph{rose trees}:

\begin{lstlisting}[xleftmargin=.0in]
let quicksort : int list -> int list = ...
type 'a tree = Tree of 'a * 'a tree list
let filter_map_tree : ('a -> 'b option) -> 'a tree -> 'b list = ...
let sort_lefts_tree (t : (int + bool) tree) =
  quicksort (filter_map_tree find_left t)
\end{lstlisting}

Rose trees can have arbitrary and variable branching factors, enabled by defining trees and lists of child trees in a nested fashion. Existing AARA systems cannot derive a bound for this function.
AARA's inability to derive bounds that are functions of general
algebraic data structures poses a real deficiency.
Extending polynomial AARA to handle regular recursive types has been an open problem since it was introduced by Hoffmann and Hofmann in 2010~\cite{hoffmann2010amortized}. The core challenge lies in finding a class of potential functions for these types that is expressive but constrained enough to be closed under the operations necessary for typing. 


We address this longstanding gap by introducing a notion of resource polynomials for regular recursive types that meets the requirements of AARA. We draw inspiration from past approaches, but ultimately adopt a more algebraic view that we believe better follows the structure of types. In particular, the indices that generate the base polynomials match the values they classify nearly exactly.
Our resource polynomials are a generalization of all previously proposed
polynomial resource functions of AARA~\cite{jost2010static,hoffmann2010amortized,hoffmann2011multivariate,hoffmann2017towards} and can be seen as a general notion of
polynomials for values of a given recursive type.
We give the two constructions, \emph{shifting} and \emph{sharing}, which witness resource polynomials' closure under discrete difference and multiplication, respectively; together, they enable AARA's inference of resource bounds using only linear programming. We describe these and other operations as linear maps on free semimodules in order to abstract away some of the tedious details in previous presentations. 
Finally, we build a type system for a version of FPC (a call-by-value language with recursive types~\cite{fiore1994axiomatization}) enriched with resource usage that makes use of these resource polynomials and prove it sound via a logical relations argument.

\section{Overview}
\label{sec:overview}

To start with, we review AARA~(\cref{sec:aaraintro}), detail its potential functions for
lists~(\cref{sec:listpotential}), and present the intuition behind our
extension to regular recursive types~(\cref{sec:nestedpotential}).

\subsection{A quick introduction to AARA}
\label{sec:aaraintro}

AARA is a type-based technique for automatically inferring worst-case
cost bounds for programs that manipulate data structures.
It uses a formalization of the physicist's method introduced by
Tarjan and Sleator~\cite{tarjan1985amortized} to assign potential functions to data
structures that can then be used for amortized analysis.
The potential available in a given context is then tracked across the
program to ensure that the available potential is sufficient to cover
the cost of the next transition and the potential of the resulting
state.

To automate the physicist's method, AARA defines a set of fixed
potential functions for each type.
These potential functions have to satisfy certain (closure) properties
that enable a smooth integration of potential tracking with the typing
rules.
This integration is the key to automation, because the potential
tracking can be expressed with linear constraints that can be
generated in tandem with type checking or inference. These constraints
can then be solved by an LP solver, resulting in a final type
annotated with a resource bound.



\subsubsection*{Example: \lstinline{filter_map}}
To demonstrate the basics of the AARA approach, we build up
the motivating example shown in the introduction.
As then, say we are interested in the number of cons cell creations as our cost model. To start, consider the standard list function \(\mathsf{filter\_map} : (\tau \to \mathsf{option}(\sigma)) \to \mathsf{list}(\tau) \to \mathsf{list}(\sigma)\), which is implemented as follows:
\begin{lstlisting}[xleftmargin=.4in]
let rec filter_map f l = match l with
  | [] -> []
  | x :: l' -> match f x with
            | Some y -> y :: filter_map f l'
            | None -> filter_map f l'
\end{lstlisting}
The evaluation of the expression \lstinline{filter_map f l} applies \lstinline{f} to each element of \lstinline{l} and collects the \lstinline{Some} results into the output list.
The cost of the evaluation depends on the cost of the higher-order argument \lstinline{f}.
First assume that the cost of \lstinline{f} is $0$.
Then the cost of \lstinline{filter_map f l} is, at worst, the length $|l|$ of the list \lstinline{l}.
This bound can be expressed by the following type.
%
\begin{align*}
  \mathsf{filter\_map} :\; &(\langle \mathsf{int}^0 + \mathsf{bool}^0, 0 \rangle \to \langle \mathsf{option}^0(\mathsf{int}), 0\rangle) \to \\
  &\langle \mathsf{list}^1(\mathsf{int}^0 + \mathsf{bool}^0), 0 \rangle \to 
  \langle \mathsf{list}^0(\mathsf{int}), 0 \rangle
\end{align*}
The type
$\langle int^0 + bool^0, 0 \rangle \to \langle \mathsf{option}^0(int), 0\rangle$
of the higher-order argument states that the function does not need
any input potential and does not assign any potential to its output.
The list type $\mathsf{list}^1(int^0 + bool^0)$ expresses that the list
argument carries one potential unit per element of the list,
reflecting the bound to be proved.
The output potential $\langle \mathsf{list}^0(int), 0 \rangle$ is zero
in this case but is, in general, important for the compositionality of the
analysis.
To see how the potential of the result can be used consider the
following typing:
%
\begin{align*}
  \mathsf{filter\_map} :\; &(\langle \mathsf{int}^1 + \mathsf{bool}^0, 0 \rangle \to \langle \mathsf{option}^1(\mathsf{int}), 0\rangle) \to \\
  &\langle \mathsf{list}^1(\mathsf{int}^1 + \mathsf{bool}^0), 0 \rangle \to 
  \langle \mathsf{list}^1(\mathsf{int}), 0 \rangle
\end{align*}
Here the resulting list carries $1$ potential unit per element.
To cover this additional potential, the input list now has type
$\mathsf{list}^1(int^1 + bool^0)$, which expresses $1$ potential unit
per element and one additional potential unit for each element of the
form $\inlex{n}$.
The type of the higher-order argument expresses that $1$
potential unit is necessary if the argument has the form $\inlex{n}$ and
otherwise none is needed. After the evaluation there
is $1$ unit left if the result is \lstinline{Some n} and
$0$ otherwise.

The right type annotation for \lstinline{filter_map} depends on the
context in which the function is used. The general type can be
described with abstract annotations and linear constraints:
%
\begin{align*}
  \mathsf{filter\_map} :\; &(\langle \mathsf{int}^{q_1} + \mathsf{bool}^{q_2}, {p_0} \rangle \to \langle \mathsf{option}^{q_3}(\mathsf{int}), p_0'\rangle) \to \\
  &\langle \mathsf{list}^{r_1}(\mathsf{int}^{r_2} + \mathsf{bool}^{r_3}), p_1 \rangle \to 
  \langle \mathsf{list}^{r_4}(\mathsf{int}), {p'_1} \rangle
\end{align*}
\[ r_1 \geq p_0 + 1, r_2 \geq q_1, r_3 \geq q_2, p_1 \geq p_1',  q_3+p_0' \geq r_4\]
To be clear, this symbolic representation cannot be expressed \emph{within} the type system. However, as part of type inference, this form is derived with the symbolic values as metavariables; the constraints are then solved using linear programming to find a solution that, when substituted in, provides a concrete judgement within the type system.
An essential requirement, then, is that the transfer of potential from the
list to its head and tail can be expressed with linear constraints.
For linear potential functions, this is straightforward
since the annotation of the head is the annotation of the element type
and the annotation of the tail is the annotation of the matched list.

\subsection{Potential functions of lists}
\label{sec:listpotential}

To go beyond linear potential, polynomial AARA extends the notation \(\langle L^{q_1} (A), q_0 \rangle\) to \(L^{(q_0, q_1, \dots, q_m)}(A)\), where \(\vec{q}\) is a vector of coefficients that specify a polynomial~\cite{hoffmann2010amortized}. What is less clear is how to maintain the aforementioned requirement for only linear constraints to come of destructing a list. The answer turns out to be a clever choice of basis: the coefficients \((q_i)\) correspond to a basis of binomial coefficients \(\binom{n}{i}\), rather than monomials \(n^i\), due to their posession of an \emph{additive shift} function \(\shift(q_0, \dots, q_m) = (q_0 + q_1, \dots, q_{m-1} + q_m, q_m)\). This is a \emph{linear} function that specifies how to preserve potential--that is, evaluating \(\shift(\vec{q})\) on \(n\) is equal to evaluating \(\vec{q}\) on \(n + 1\). This concept of a linear shift function turns out to be a key guiding abstraction that guarantees the generation of only linear constraints in the typing rule for pattern matching.

This principle carries over when AARA is extended to multivariate annotations--including terms like \(m \cdot n\), as might be required when computing the Cartesian product of two lists--but the coefficient vector notation does not. To address this, multivariate AARA introduces the use of \emph{indices} to form a basis of potential functions~\cite{hoffmann2011multivariate}. Intuitively, they generalize the notion of giving names to ``monomials'' like \(\binom{n}{2}\) or \(\binom{n}{3} \binom{m}{2}\). List indices have the form \([i_1, \dots, i_n]\), where each \(i_j\) is an index for the list elements' type. Such a list index refers to counting the number of combinations of elements of the list that match the inner indices. It's perhaps best illustrated with some examples; we'll stick with univariate examples for simplicity's sake, but it is easily extended to the multivariate case. For starters, take the index \([\star] = \star :: \mathsf{nil}\) on lists, which counts the number of ways that an element \(\star\) can be followed by \(\mathsf{nil}\), i.e., the length of the list. Visually consider evaluating it on two lists of different lengths:
\vspace{-1ex}
\newcommand{\anystar}{\star}
\tikzset{cons list/.style={start chain, node distance=1em, every on chain/.style={join=by {draw=none, "::"}}}}
\newcommand{\drawlist}[4][]{%
  \draw[cons list]
    #3
    foreach \x [count=\xi from 1] in {#4} { node[on chain,#1] (#2-\xi) {\x} }
    node[on chain] (#2-nil) {[]}
}
\newcommand{\circlehighlight}[2]{%
  \draw[#1] (#2) circle [radius=0.7em]
}
\tikzset{highlight one/.style={thick,draw=RoyalBlue}}
\tikzset{highlight two/.style={thick,draw=Orange,dashed}}
\tikzset{highlight three/.style={thick,draw=OliveGreen,dotted}}
\tikzset{highlight four/.style={thick,draw=Purple,dash dot}}
\begin{center}
\begin{tikzpicture}[scale=0.8,grow=right,
                    level distance=2em,
                    edge from parent path={(\tikzparentnode.east) [draw=none] -- node {::} (\tikzchildnode.west)},
                    every edge quotes/.style={}] 
  \node at (-1, 0) (index) {Index: };
  \drawlist{index}{($(index.east)+(1em,0)$)}{\(\anystar\)};
  \circlehighlight{highlight one}{index-1};
  \circlehighlight{highlight two}{index-nil};

  \coordinate (tablecenter) at (0,-0.7);

  \begin{scope}[shift={($(tablecenter)+(-3.6,-0.3)$)}]
    \node (value1) {Value: };
    \drawlist{value1}{($(value1.east)+(1em,0)$)}{1,2};

    \foreach \i in {1,2} {
      \drawlist[\if\i\xi\else gray\fi]{value1-\i}{($(value1-1)+(0,-1.8em * \i)$)}{1,2};
      \circlehighlight{highlight one}{value1-\i-\i};
      \circlehighlight{highlight two}{value1-\i-nil};
    }

    \draw[decoration={calligraphic brace,mirror,raise=4pt},decorate,thick]
      (value1-1-1.north west) --
      node[left=8pt,text width=5em,align=right] {\emph{Result: \textbf{2}}}
      (value1-2-1.south west);
  \end{scope}

  \begin{scope}[shift={($(tablecenter) + (0.5,-0.3)$)}]
    \drawlist{value2}{(0,0)}{1,2,3,4};

    \foreach \i in {1,2,3,4} {
      \drawlist[\if\i\xi\else gray\fi]{value2-\i}{($(value2-1)+(0,-1.8em * \i)$)}{1,2,3,4};
      \circlehighlight{highlight one}{value2-\i-\i};
      \circlehighlight{highlight two}{value2-\i-nil};
    }

    \draw[decoration={calligraphic brace,raise=6pt},decorate,thick]
      (value2-1-nil.north east) --
      node[right=10pt,text width=5em] {\emph{\textbf{4}}}
      (value2-4-nil.south east);
  \end{scope}

  \draw (value1.south west) -- (value2-nil.south east);
  \draw (tablecenter) -- (tablecenter |- value2-4-1.south west);

\end{tikzpicture}
\end{center}
\newcommand{\highone}[1]{\highlight{OliveGreen}{$\displaystyle #1$}}
\newcommand{\hightwo}[1]{\highlight{Fuchsia}{$\displaystyle #1$}}
\newcommand{\colorone}{RedOrange}
\newcommand{\co}[2]{\tikzmarknode{#1}{\textcolor{\colorone}{#2}}}
\newcommand{\colortwo}{Dandelion}
\newcommand{\ct}[2]{\tikzmarknode{#1}{\textcolor{\colortwo}{#2}}}
Note that, as demonstrated by the two circles in each evaluation, there are two matches in each: a cons cell, and the ending nil. The critical aspect of list indices' evaluation is that it can be phrased purely locally in terms of the heads and tails of the index and list elements (where \(\phi\) designates the function that evaluates an index on a value):
\vspace{1.3em}
\[\phi_{\co{i0}{i}::\ct{is0}{is}} (\highone{v} :: \hightwo{vs}) = \phi_{\co{i1}{i}} (\highone{v}) \cdot \phi_{\ct{is1}{is}} (\hightwo{vs}) + \phi_{\co{i2}{i}::\ct{is2}{is}} (\hightwo{vs})\]
\begin{tikzpicture}[overlay,remember picture]
  \draw[->,color=\colorone!50] (i0.north) + (0,0.2em) -- ++ (0,1em) -| ([yshift=0.2em] i1.north);
  \draw[->,color=\colortwo!50] (is0.north) + (0,0.2em) -- ++ (0,1.5em) -| ([yshift=0.2em] is1.north);
  \draw[->,color=\colorone!50] (i0.south) + (0,-0.1em) -- ++ (0,-1em) -| ([yshift=-0.2em] i2.south);
  \draw[->,color=\colortwo!50] (is0.south) + (0,-0.1em) -- ++ (0,-0.5em) -| ([yshift=-0.2em] is2.south);
\end{tikzpicture}

which first counts the combinations that include the head element, then adds the combinations that don't. From this presentation, an analogous shift function falls out: \(\shift(i :: is) = (i, is) + (\star, i :: is)\), where the result is evaluated on \((v, vs)\) given a list \(v :: vs\). Note just how similar this is to the definition for binomial coefficients!

As an example of how these indices are used in types, return to the second type of \lstinline{filter_map f} we presented, namely \(\langle \mathsf{list}^1(\mathsf{int}^1 + \mathsf{bool}^0), 0 \rangle \to \langle \mathsf{list}^1(int), 0 \rangle\). Expressed using indices, this function requires its argument to have potential \(2 \cdot [\inlex{\star}] + 1 \cdot [\inrex{\star}]\) and returns a value with potential \(1 \cdot [\star]\).

Building toward our desire to type \lstinline{quicksort}, first consider some evaluations of the index \([\star; \star]\):
\vspace{-1ex}
\begin{center}
\begin{tikzpicture}[scale=0.8,grow=right,
                    level distance=2em,
                    edge from parent path={(\tikzparentnode.east) [draw=none] -- node {::} (\tikzchildnode.west)},
                    every edge quotes/.style={}] 

  \node at (-1.5, 0) (index) {Index: };
  \drawlist{index}{($(index.east)+(1em,0)$)}{\(\anystar\),\(\anystar\)};
  \circlehighlight{highlight one}{index-1};
  \circlehighlight{highlight two}{index-2};
  \circlehighlight{highlight three}{index-nil};

  \coordinate (tablecenter) at (0,-0.7);

  \begin{scope}[shift={($(tablecenter)+(-3.2,-0.3)$)}]
    \node (value1) {Value: };
    \drawlist{value1}{($(value1.east)+(1em,0)$)}{1,2};

    \drawlist{value1-1}{($(value1-1)+(0,-1.8em)$)}{1,2};
    \circlehighlight{highlight one}{value1-1-1};
    \circlehighlight{highlight two}{value1-1-2};
    \circlehighlight{highlight three}{value1-1-nil};

    \draw[decoration={calligraphic brace,mirror,raise=4pt},decorate,thick]
      (value1-1-1.north west) --
      node[left=8pt,align=right] {\emph{Result: \textbf{1}}}
      (value1-1-1.south west);
  \end{scope}

  \begin{scope}[shift={($(tablecenter) + (0.7,-0.3)$)}]
    \drawlist{value2}{(0,0)}{1,2,3,4};

    \foreach \i / \j [count=\n] in {1/2,1/3,1/4,2/3,2/4,3/4} {
      \drawlist[\if\i\xi\else\if\j\xi\else gray\fi\fi]
               {value2-\n}{($(value2-1)+(0,-1.8em * \n)$)}{1,2,3,4};
      \circlehighlight{highlight one}{value2-\n-\i};
      \circlehighlight{highlight two}{value2-\n-\j};
      \circlehighlight{highlight three}{value2-\n-nil};
    }

    \draw[decoration={calligraphic brace,raise=6pt},decorate,thick]
      (value2-1-nil.north east) --
      node[right=10pt,text width=5em] {\emph{\textbf{6}}}
      (value2-6-nil.south east);
  \end{scope}

  \draw (value1.south west) -- (value2-nil.south east);
  \coordinate (centerline) at ($(tablecenter) + (0.3,0)$);
  \draw (centerline) -- (centerline |- value2-6-1.south west);

  \useasboundingbox (-0.5\linewidth,0) rectangle (0.5\linewidth,0);
\end{tikzpicture}
\end{center}
As expected, we find that this index corresponds to 
\(\binom{n}{2}\). Thus, given that we 
know \lstinline{quicksort} has cost \(n^2\), 
we can express the required potential of its argument using indices as \(2 \cdot [\star; \star] + 1 \cdot [\star]\). Finally, we can consider our original function, \lstinline{sort_lefts_list}. Here we can see that it must require an input potential of \(2 \cdot [\inlex{\star}; \inlex{\star}] + 2 \cdot [\inlex{\star}]\)--\lstinline{filter_map} consumes the \(1 \cdot [\inlex{\star}]\) part of it and passes on the rest to \lstinline{quicksort}.


\subsection{Extending to regular recursive types}
\label{sec:nestedpotential}

However, these indices do not obviously generalize to regular inductive types. Jost et al.~\cite{jost2010static} handle potential on regular inductives, but only in the very restricted setting of univariate linear potential, which amounts to just counting constructors. Hoffmann et al.~\cite{hoffmann2011multivariate} and their successor works handle more expressive potential functions, but don't support regular inductives and treat even just binary trees as lists for potential purposes. Tree indices are identical to list indices, and tree values are just list versions of themselves flattened by a preorder traversal. This results in the combinatorial \emph{structure} of trees being completely lost.

Let's explore a different design. For one, we know we absolutely must preserve some sort of linear shift function. Another hint comes from Hoffmann et al.~\cite{hoffmann2011multivariate}, who observe in passing that their indices for a type \(\tau\) essentially follow the structure of values of type \(\tau\). We find that they were on to something after all. We consider indices that correspond almost \emph{exactly} to the values of the type they describe. To build intuition, we'll first give some examples on specific data types before we get to describing the general case.

\subsubsection{Stepping stone: binary trees}

We'll start by looking at the case of binary trees. In the following diagrams, tree nodes are circles while leaves are triangles. Consider evaluating the ``leaf'' index on two different trees:
\tikzset{treenode/.style={draw,semithick,circle,inner sep=1pt,font=\footnotesize}}
\tikzset{treeleaf/.style={draw,semithick,regular polygon,regular polygon sides=3,inner sep=1.5pt}}
\begin{center}
\begin{tikzpicture}[scale=0.9,level distance=1.2em,sibling distance=2em, node distance=1em]
  \node at (-0.6, 0) (index) {Index: };
  \node[treeleaf] (index-1) [right=of index] {};
  \circlehighlight{highlight one}{index-1};

  \coordinate (tablecenter) at (0,-0.5);

  \node[treeleaf] at (0.3\textwidth,0) (legendleaf) {};
  \node at ($(0.3\textwidth - 0.7cm, 0.03cm)$) {leaf: };
  \node[treenode] at (0.3\textwidth,-5mm) {\phantom{1}};
  \node at ($(0.3\textwidth - 0.8cm, -4.8mm)$) {node: };

  \begin{scope}[shift={($(tablecenter)+(-1.0,-0.3)$)}]
    \node at (-2,-0.2) (value1label) {Value:};
    \foreach \coord / \rootcolor / \onecolor / \twocolor [count=\i]
             in {(0,0)///,(-1.5,-1.5)/gray//gray,(0,-1.5)/gray/gray/} {
      \node[treenode,\rootcolor] at \coord {1}
        child { node[treeleaf,\onecolor] (value1-\i-1) {} }
        child { node[treeleaf,\twocolor] (value1-\i-2) {} };
    }

    \circlehighlight{highlight one}{value1-2-1};
    \circlehighlight{highlight one}{value1-3-2};

    \draw[decoration={calligraphic brace,mirror,raise=8pt},decorate,thick]
      (value1-2-1.south west) --
      node[below=10pt] {\emph{Result: \textbf{2}}}
      (value1-3-2.south east);
  \end{scope}

  \begin{scope}[shift={($(tablecenter)+(1.0,-0.3)$)}]
    \foreach \coord / \rootcolor / \onecolor / \intercolor / \twocolor / \threecolor [count=\i] 
             in {(0,0)/////,(0,-1.5)/gray//gray/gray/gray,(1.5,-1.5)/gray/gray/gray//gray,(3,-1.5)/gray/gray/gray/gray/} {
      \node[treenode,\rootcolor] at \coord (value2-\i-root) {1}
        child { node[treeleaf,\onecolor] (value2-\i-1) {} }
        child { node[treenode,\intercolor] {2}
                  child { node[treeleaf,\twocolor] (value2-\i-2) {} }
                  child { node[treeleaf,\threecolor] (value2-\i-3) {} }
              };
    }
    \circlehighlight{highlight one}{value2-2-1};
    \circlehighlight{highlight one}{value2-3-2};
    \circlehighlight{highlight one}{value2-4-3};

    \draw[decoration={calligraphic brace,mirror,raise=8pt},decorate,thick]
      (value2-2-1.west |- value2-4-3.south) --
      node[below=10pt] (value2eval) {\emph{Result: \textbf{3}}}
      (value2-4-3.south east);
  \end{scope}

  \coordinate (midway) at ($(value2-1-3.south)!.5!(value2-2-root.north)$);
  \draw (value1label.west |- midway) -- (value2-4-3.east |- midway);
  \draw (tablecenter) -- (tablecenter |- value2eval.south);

\end{tikzpicture}
\end{center}
\vspace{-0.5em}
This counts the number of leaves in the tree, just as the first list index example (consisting of a cons node) counted the number of cons nodes in a list. Now let's look at the next simplest index, a node connecting two leaves:
\begin{center}
\begin{tikzpicture}[scale=0.9,level distance=1.4em,sibling distance=2.2em, node distance=1em]
  \node at (-1, 0) (index) {Index: };
  \node[treenode] (index-root) [right=2em of index] {\(\anystar\)}
    child { node[treeleaf] (index-1) {} }
    child { node[treeleaf] (index-2) {} };
  \circlehighlight{highlight one}{index-root};
  \circlehighlight{highlight two}{index-1};
  \circlehighlight{highlight three}{index-2};

  \coordinate (tablecenter) at (0,-0.9);

  \begin{scope}[shift={($(tablecenter)+(-1.0,-0.3)$)}]
    \node at (-2,-0.2) (value1label) {Value:};
    \foreach \coord / \nodecolor [count=\i] in {(0,0)/,(0,-1.7)/} {
      \node[treenode,\nodecolor] at \coord (value1-\i-root) {1}
        child { node[treeleaf] (value1-\i-1) {} }
        child { node[treeleaf] (value1-\i-2) {} };
    }

    \circlehighlight{highlight one}{value1-2-root};
    \circlehighlight{highlight two}{value1-2-1};
    \circlehighlight{highlight three}{value1-2-2};

    \draw[decoration={calligraphic brace,mirror,raise=8pt},decorate,thick]
      (value1-2-1.south west) --
      node[below=10pt,text width=5em,align=center] {\emph{Result: \textbf{1}}}
      (value1-2-2.south east);
  \end{scope}

  \begin{scope}[shift={($(tablecenter)+(1.0,-0.3)$)}]
    \foreach \coord / \rootcolor / \onecolor / \intercolor / \twocolor / \threecolor [count=\i] 
             in {(0,0)/////,(0,-1.7)///gray//gray,(1.7,-1.7)///gray/gray/,(3.4,-1.7)/gray/gray///} {
      \node[treenode,\rootcolor] at \coord (value2-\i-root) {1}
        child { node[treeleaf,\onecolor] (value2-\i-1) {} }
        child { node[treenode,\intercolor] (value2-\i-inter) {2}
                  child { node[treeleaf,\twocolor] (value2-\i-2) {} }
                  child { node[treeleaf,\threecolor] (value2-\i-3) {} }
              };
    }

    \circlehighlight{highlight one}{value2-2-root};
    \circlehighlight{highlight two}{value2-2-1};
    \circlehighlight{highlight three}{value2-2-2};

    \circlehighlight{highlight one}{value2-3-root};
    \circlehighlight{highlight two}{value2-3-1};
    \circlehighlight{highlight three}{value2-3-3};

    \circlehighlight{highlight one}{value2-4-inter};
    \circlehighlight{highlight two}{value2-4-2};
    \circlehighlight{highlight three}{value2-4-3};

    \draw[decoration={calligraphic brace,mirror,raise=8pt},decorate,thick]
      (value2-2-1.west |- value2-4-3.south) --
      node[below=10pt] (value2eval) {\emph{Result: \textbf{3}}}
      (value2-4-3.south east);
  \end{scope}

  \coordinate (midway) at ($(value2-1-3.south)!.5!(value2-2-root.north)$);
  \draw (value1label.west |- midway) -- (value2-4-3.east |- midway);
  \draw (tablecenter) -- (tablecenter |- value2eval.south);
\end{tikzpicture}
\end{center}
\vspace{-0.5em}
The evaluation on the right may be confusing at first--isn't there only one subtree that matches the index? The answer may be seen in analogy with the combinatorial evaluation on lists presented earlier: all possible combinations of constructors are considered, subject to the ordering imposed by the index.

These examples are instances of the rules for binary trees, again defined purely locally:
\newcommand{\treeleaf}{\tikz[scale=0.8] \node[treeleaf] {};}
\newcommand{\treenode}[3]{%
  \tikz[level distance=1em,sibling distance=1.5em,baseline=($(root.south)!.5!(leaf1)$)]
    \node[treenode] (root) {\(#1\)}
      child { node[font=\footnotesize,inner sep=0em] (leaf1) {\(#2\)} }
      child { node[font=\footnotesize,inner sep=0em] (leaf2) {\(#3\)} };
}
\vspace{-1em}
\begin{align*}
  \phi_{\treeleaf}(\treeleaf) &= 1 \\
  \phi_{\treenode{i}{i_1}{i_2}}(\treeleaf) &= 0 \\[0.2em]
  \phi_{\treeleaf}\left( \treenode{v}{t_1}{t_2} \right) &= \phi_{\treeleaf}(t_1) + \phi_{\treeleaf}(t_2) \\[0.2em]
  \phi_{\treenode{i}{i_1}{i_2}}\left( \treenode{v}{t_1}{t_2} \right) &= \phi_i(v) \cdot \phi_{i_1}(t_1) \cdot \phi_{i_2}(t_2) \\[-0.8em]
  &+ \phi_{\treenode{i}{i_1}{i_2}}(t_1) + \phi_{\treenode{i}{i_1}{i_2}}(t_2)
\end{align*}

\subsubsection{Eureka!}

The insight for the general case, then, is to notice this correspondence between the rules for lists and the rules for binary trees:
\begin{align*}
  \phi_{i :: is} (v :: vs) &= \tikzmarknode{listhere}{\phi_i(v) \cdot \phi_{is}(vs)} \\[0.3em]
  &+ \tikzmarknode{listthere}{\phi_{i :: is}(vs)} \\[1em]
  \phi_{\treenode{i}{i_1}{i_2}}\left( \treenode{v}{t_1}{t_2} \right) &= \tikzmarknode{treehere}{\phi_i(v) \cdot \phi_{i_1}(t_1) \cdot \phi_{i_2}(t_2)} \\
  &+ \tikzmarknode{treethere}{\phi_{\treenode{i}{i_1}{i_2}}(t_1) + \phi_{\treenode{i}{i_1}{i_2}}(t_2)}
\end{align*}
\begin{tikzpicture}[overlay,remember picture]
  \draw[thick,dashed,red] ($(listhere.north west)+(-1pt,1pt)$) rectangle ($(listhere.south east)+(1pt,-1pt)$);
  \draw[thick,dotted,blue] ($(listthere.north west)+(-1pt,1pt)$) rectangle ($(listthere.south east)+(1pt,-1pt)$);
  \draw[thick,dashed,red] ($(treehere.north west)+(-1pt,1pt)$) rectangle ($(treehere.south east)+(1pt,-1pt)$);
  \draw[thick,dotted,blue] ($(treehere.south west)+(-2pt,-9pt)$) rectangle ($(treehere.south east)+(6pt,-36pt)$);
\end{tikzpicture}

\begin{intuition*}
  To evaluate an index at a constructor, first evaluate it at the immediate constructor, then add that to the evaluation of the original index at all direct children.
\end{intuition*}
Note that it satisfies our desired properties: it is multivariate through the use of multiplication at the immediate constructor evaluation; it is structure-dependent by evaluating recursively only at direct children; and, critically, it suggests a shift function that exactly mirrors this construction. Having observed that, we will leave it to \cref{sec:indexeval} to define this formally, but we will at least address one ambiguity in that specification: what are the ``direct children'' of a constructor?

\subsubsection{The prize: rose trees}
\label{sec:rosetrees}

\newcommand{\bnode}{\phantom{1}}

The direct children of a cons cell or tree node are readily apparent, but they are less obvious for our original motivating data type, the rose tree. Let us again turn to examples, starting with the simplest index:
\begin{center}
\begin{tikzpicture}[level distance=1.4em,sibling distance=2.2em, node distance=0.7em,scale=0.9]
  \node at (-1, 0) (index) {Index: };
  \node[treenode] (index-root) [right=2em of index] {\(\star\)};
  \node[below=of index-root,font=\footnotesize] (index-leaf) {[]};
  \draw (index-root) -- (index-leaf);
  \circlehighlight{highlight one}{index-root};
  \circlehighlight{highlight two}{index-leaf};

  \coordinate (tablecenter) at (0,-1.2);

  \begin{scope}[shift={($(tablecenter)+(-1.3,-0.3)$)},font=\footnotesize]
    \node[font=\normalsize] at (-2,-0.2) (value1label) {Value:};
    \foreach \coord / \ca / \cb / \cc / \cd / \ce / \cf [count=\i] in {(0,0)//////,(0,-2.7)//gray/gray/gray/gray/,(-2.2,-5.4)/gray///gray/gray/gray,(0,-5.4)/gray/gray/gray///gray} {
      \node[treenode,\ca] at \coord (value1-\i-root) {1};
      \node[\cb,below=of value1-\i-root] (value1-\i-cons1) {::};
      \draw (value1-\i-root) -- (value1-\i-cons1);

      \node[\cb,treenode,left=-0.3mm of value1-\i-cons1] (value1-\i-2) {2};
      \node[\cc,below=of value1-\i-2] (value1-\i-2nil) {[]};
      \draw (value1-\i-2) -- (value1-\i-2nil);

      \node[\cd,treenode,right=-0.3mm of value1-\i-cons1] (value1-\i-3) {3};
      \node[\cd,right=0mm of value1-\i-3] (value1-\i-cons2) {::};
      \node[\ce,below=of value1-\i-3] (value1-\i-3nil) {[]};
      \draw (value1-\i-3) -- (value1-\i-3nil);

      \node[\cf,right=-1.5mm of value1-\i-cons2] (value1-\i-nil) {[]};
    }

    \circlehighlight{highlight one}{value1-2-root};
    \circlehighlight{highlight two}{value1-2-nil};

    \circlehighlight{highlight one}{value1-3-2};
    \circlehighlight{highlight two}{value1-3-2nil};

    \circlehighlight{highlight one}{value1-4-3};
    \circlehighlight{highlight two}{value1-4-3nil};

    \draw[decoration={calligraphic brace,mirror,raise=8pt},decorate,thick]
      (value1-3-2nil.south west) --
      node[below=10pt,font=\normalfont,align=center] {\emph{Result: \textbf{3}}}
      (value1-3-2nil.south west -| value1-4-nil.south east);
  \end{scope}

  \begin{scope}[shift={($(tablecenter)+(1.0,-0.3)$)},font=\footnotesize]
    \foreach \coord / \ca / \cb / \cc / \cd / \ce / \cf / \cg / \ch / \ci / \cj [count=\i] in {(0,0)//////////,(0,-2.7)//gray/gray/gray/gray/gray/gray/gray/gray/,(0,-5.4)/gray///gray/gray/gray/gray/gray/gray/gray,(2.4,-5.4)/gray/gray/gray//gray/gray/gray/gray//gray,(0,-8.1)/gray/gray/gray/gray///gray/gray/gray/gray,(2.4,-8.1)/gray/gray/gray/gray/gray/gray///gray/gray} {
      \node[treenode,\ca] at \coord (value2-\i-root) {1};
      \node[\cb,below=of value2-\i-root] (value2-\i-cons1) {::};
      \draw (value2-\i-root) -- (value2-\i-cons1);

      \node[\cb,treenode,left=-0.3mm of value2-\i-cons1] (value2-\i-2) {2};
      \node[\cc,below=of value2-\i-2] (value2-\i-2nil) {[]};
      \draw (value2-\i-2) -- (value2-\i-2nil);

      \node[\cd,treenode,right=-0.3mm of value2-\i-cons1] (value2-\i-3) {3};
      \node[\cd,right=0mm of value2-\i-3] (value2-\i-cons2) {::};
      \node[\ce,below=of value2-\i-3] (value2-\i-3cons) {::};
      \draw (value2-\i-3) -- (value2-\i-3cons);

      \node[\ce,treenode,left=-0.3mm of value2-\i-3cons] (value2-\i-4) {4};
      \node[\cf,below=of value2-\i-4] (value2-\i-4nil) {[]};
      \draw (value2-\i-4) -- (value2-\i-4nil);

      \node[\cg,treenode,right=-0.3mm of value2-\i-3cons] (value2-\i-5) {5};
      \node[\cg,right=0mm of value2-\i-5] (value2-\i-cons3) {::};
      \node[\ch,below=of value2-\i-5] (value2-\i-5nil) {[]};
      \draw (value2-\i-5) -- (value2-\i-5nil);
      \node[\ci,right=-1.5mm of value2-\i-cons3] (value2-\i-nil2) {[]};

      \node[\cj,right=-1.5mm of value2-\i-cons2] (value2-\i-nil) {[]};
    }

    \circlehighlight{highlight one}{value2-2-root};
    \circlehighlight{highlight two}{value2-2-nil};

    \circlehighlight{highlight one}{value2-3-2};
    \circlehighlight{highlight two}{value2-3-2nil};

    \circlehighlight{highlight one}{value2-4-3};
    \circlehighlight{highlight two}{value2-4-nil2};

    \circlehighlight{highlight one}{value2-5-4};
    \circlehighlight{highlight two}{value2-5-4nil};

    \circlehighlight{highlight one}{value2-6-5};
    \circlehighlight{highlight two}{value2-6-5nil};

    \draw[decoration={calligraphic brace,mirror,raise=8pt},decorate,thick]
      (-0.7, -10.3) --
      node[below=10pt,font=\normalfont,align=center] {\emph{Result: \textbf{5}}}
      (4, -10.3);
  \end{scope}

  \draw (-4.0,-3.8) -- (5,-3.8);
  \draw (0.2,-1) -- (0.2,-12.7);

\end{tikzpicture}
\end{center}

And a more complex index:
\begin{center}
\begin{tikzpicture}[level distance=1.4em,sibling distance=2.2em, node distance=0.7em,scale=0.9]
  \node at (-1, 0) (index) {Index: };
  \node[treenode] (index-root) [right=2em of index] {\(\star\)};
  \node[below=of index-root,font=\footnotesize] (index-cons) {::};
  \node[left=-0.3mm of index-cons] (index-node) {\(\star\)};
  \draw (index-root) -- (index-cons);
  \node[right=-0.3mm of index-cons,font=\footnotesize] (index-nil) {[]};
  \node[below=of index-node,font=\footnotesize] (index-leaf) {[]};
  \draw (index-node) -- (index-leaf);

  \circlehighlight{highlight one}{index-root};
  \circlehighlight{highlight two}{index-node};
  \circlehighlight{highlight three}{index-leaf};
  \circlehighlight{highlight four}{index-nil};

  \coordinate (tablecenter) at (0,-1.8);

  \begin{scope}[shift={($(tablecenter)+(-1.3,-0.3)$)},font=\footnotesize]
    \node[font=\normalsize] at (-2,-0.2) (value1label) {Value:};
    \foreach \coord / \ca / \cb / \cc / \cd / \ce / \cf [count=\i] in {(0,0)//////,(-2.3,-2.7)////gray/gray/,(0,-2.7)//gray/gray///} {
      \node[treenode,\ca] at \coord (value1-\i-root) {1};
      \node[\cb,below=of value1-\i-root] (value1-\i-cons1) {::};
      \draw (value1-\i-root) -- (value1-\i-cons1);

      \node[\cb,treenode,left=-0.3mm of value1-\i-cons1] (value1-\i-2) {2};
      \node[\cc,below=of value1-\i-2] (value1-\i-2nil) {[]};
      \draw (value1-\i-2) -- (value1-\i-2nil);

      \node[\cd,treenode,right=-0.3mm of value1-\i-cons1] (value1-\i-3) {3};
      \node[\cd,right=0mm of value1-\i-3] (value1-\i-cons2) {::};
      \node[\ce,below=of value1-\i-3] (value1-\i-3nil) {[]};
      \draw (value1-\i-3) -- (value1-\i-3nil);

      \node[\cf,right=-1.5mm of value1-\i-cons2] (value1-\i-nil) {[]};
    }

    \circlehighlight{highlight one}{value1-2-root};
    \circlehighlight{highlight two}{value1-2-2};
    \circlehighlight{highlight three}{value1-2-2nil};
    \circlehighlight{highlight four}{value1-2-nil};

    \circlehighlight{highlight one}{value1-3-root};
    \circlehighlight{highlight two}{value1-3-3};
    \circlehighlight{highlight three}{value1-3-3nil};
    \circlehighlight{highlight four}{value1-3-nil};

    \draw[decoration={calligraphic brace,mirror,raise=8pt},decorate,thick]
      (value1-2-2nil.south west) --
      node[below=10pt,font=\normalfont,align=center] {\emph{Result: \textbf{2}}}
      (value1-2-2nil.south west -| value1-3-nil.south east);
  \end{scope}

  \begin{scope}[shift={($(tablecenter)+(1.0,-0.3)$)},font=\footnotesize]
    \foreach \coord / \ca / \cb / \cc / \cd / \ce / \cf / \cg / \ch / \ci / \cj [count=\i] in {(0,0)//////////,(0,-2.7)////gray/gray/gray/gray/gray/gray/,(2.7,-2.7)//gray/gray//gray/gray/gray/gray//,(0,-5.4)/gray/gray/gray////gray/gray//gray,(2.7,-5.4)/gray/gray/gray//gray/gray////gray} {
      \node[treenode,\ca] at \coord (value2-\i-root) {1};
      \node[\cb,below=of value2-\i-root] (value2-\i-cons1) {::};
      \draw (value2-\i-root) -- (value2-\i-cons1);

      \node[\cb,treenode,left=-0.3mm of value2-\i-cons1] (value2-\i-2) {2};
      \node[\cc,below=of value2-\i-2] (value2-\i-2nil) {[]};
      \draw (value2-\i-2) -- (value2-\i-2nil);

      \node[\cd,treenode,right=-0.3mm of value2-\i-cons1] (value2-\i-3) {3};
      \node[\cd,right=0mm of value2-\i-3] (value2-\i-cons2) {::};
      \node[\ce,below=of value2-\i-3] (value2-\i-3cons) {::};
      \draw (value2-\i-3) -- (value2-\i-3cons);

      \node[\ce,treenode,left=-0.3mm of value2-\i-3cons] (value2-\i-4) {4};
      \node[\cf,below=of value2-\i-4] (value2-\i-4nil) {[]};
      \draw (value2-\i-4) -- (value2-\i-4nil);

      \node[\cg,treenode,right=-0.3mm of value2-\i-3cons] (value2-\i-5) {5};
      \node[\cg,right=0mm of value2-\i-5] (value2-\i-cons3) {::};
      \node[\ch,below=of value2-\i-5] (value2-\i-5nil) {[]};
      \draw (value2-\i-5) -- (value2-\i-5nil);
      \node[\ci,right=-1.5mm of value2-\i-cons3] (value2-\i-nil2) {[]};

      \node[\cj,right=-1.5mm of value2-\i-cons2] (value2-\i-nil) {[]};
    }

    \circlehighlight{highlight one}{value2-2-root};
    \circlehighlight{highlight two}{value2-2-2};
    \circlehighlight{highlight three}{value2-2-2nil};
    \circlehighlight{highlight four}{value2-2-nil};

    \circlehighlight{highlight one}{value2-3-root};
    \circlehighlight{highlight two}{value2-3-3};
    \circlehighlight{highlight three}{value2-3-nil2};
    \circlehighlight{highlight four}{value2-3-nil};

    \circlehighlight{highlight one}{value2-4-3};
    \circlehighlight{highlight two}{value2-4-4};
    \circlehighlight{highlight three}{value2-4-4nil};
    \circlehighlight{highlight four}{value2-4-nil2};

    \circlehighlight{highlight one}{value2-5-3};
    \circlehighlight{highlight two}{value2-5-5};
    \circlehighlight{highlight three}{value2-5-5nil};
    \circlehighlight{highlight four}{value2-5-nil2};

    \draw[decoration={calligraphic brace,mirror,raise=8pt},decorate,thick]
      (-0.7, -7.5) --
      node[below=10pt,font=\normalfont,align=center] {\emph{Result: \textbf{4}}}
      (4.2, -7.5);
  \end{scope}

  \draw (-4.3,-4.4) -- (5.4,-4.4);
  \draw (0.2,-1.8) -- (0.2,-10.5);

\end{tikzpicture}
\end{center}
To specify the direct children of a rose tree node, we piggyback off of the list's notion of direct children: a rose tree node's direct child is any node that appears in its list. This notion of pushing the problem of recursive evaluation of the outer type down to the inner type is precisely the solution. Speaking anthropomorphically, the rose tree can identify, in any given list node, the one possible occurrence of a tree (in a cons cell); the list can then use that information to look through the recursive occurences of the list. This intuition is formalized and explained once again in~\cref{sec:respolydefs}.

Calling back to our motivating \lstinline{filter_map_tree}, specifying a required potential for the same function as the second typing of \lstinline{filter_map} is now as simple as \(2 \cdot \mathsf{Tree}(\inlex{\star}, []) + 1 \cdot \mathsf{Tree}(\inrex{\star}, [])\), i.e., \(2m + 1n\) where \(m\) is the number of nodes with ints and \(n\) is the number of nodes with bools. For the overall \lstinline{sort_lefts_tree}, it is the similarly natural \(2 \cdot \mathsf{Tree}(\inlex{\star}, [\mathsf{Tree}(\inlex{\star}, [])]) + 2 \cdot \mathsf{Tree}(\inlex{\star}, [])\), for much the same reasons as the list case. Incredibly, these indices look nearly as simple as the indices for the equivalent list functions, which we believe is a strong suggestion of elegance.

\section{Resource Polynomials}
\label{sec:respolys}

As in previous AARA type systems, \emph{resource polynomials} serve as our language's mechanism to assign potential to typed values. Our core contribution to their theory is a generalization of past systems' bounded-branching tree types to more general algebraic, possibly-mutually recursive types. In this section, we first formally define these potential functions, then give manipulations of them necessary for the type system, continuing our use of running examples to illustrate the definitions.

\subsection{Resource polynomial definitions}
\label{sec:respolydefs}

\subsubsection*{Types and values}
\label{sec:typesandvalues}

To show to what exactly resource polynomials assign potential, we first give the types and values over which the resource polynomials are defined in Figure~\ref{fig:typesandvalues}. The types presented are standard, save the arrow type--the details of which are irrelevant to the resource polynomials and explained in \cref{sec:typesystem}. We also give, in Figure~\ref{fig:typesandvalues}, inference rules for the set of syntactically valid values \(\mathcal{V}(\tau)\) for a given type \(\tau\). The notation \([\sigma/\alpha] \tau\) refers to the capture-avoiding substitution of \(\sigma\) for \(\alpha\) in \(\tau\). Note that these typing rules do not guarantee anything for the purposes of language semantics--in particular, the function type here is practically unrestricted--but are instead just to guarantee that the potential function can be evaluated on indices and values of matching types.

\begin{figure}
  \centering
  \begin{subfigure}[b]{0.95\linewidth}
    \centering
    \[\renewcommand{\arraystretch}{1.1}
      \begin{array}{rrclr}
      \textrm{Types} & \tau &\bnfdef &\alpha &\textrm{Type variable} \\
                    &&\bnfalt &\mathbf{1} &\textrm{Unit} \\
                    &&\bnfalt &\tau_1 \times \tau_2 &\textrm{Product} \\
                    &&\bnfalt &\tau_1 + \tau_2 &\textrm{Sum} \\
                    &&\bnfalt &\funty{\tau_1}{\tau_2}{\Theta}{\Theta_{\costfree}} &\textrm{Arrow} \\
                    &&\bnfalt &\recty{\alpha}{\tau} &\textrm{Isorecursive}
      \end{array}\]
  \end{subfigure}
  \begin{subfigure}[b]{0.95\linewidth}
    \centering
    \[\renewcommand{\arraystretch}{1.1}
      \begin{array}{rrclcl}
      \textrm{Values} & v
       &\bnfdef &\unitex &\bnfalt &\pairex{v_1}{v_2} \\
      &&\bnfalt &\inlex{v} &\bnfalt &\inrex{v} \\
      &&\bnfalt &\funex{f}{x}{e} &\bnfalt &\foldex{v}
      \end{array}\]
  \end{subfigure}
  \begin{subfigure}[T]{\linewidth}
    \centering
    \begin{mathpar}
      \inferrule
      {\strut}
      {\unitex \in \mathcal{V}(\mathbf{1})}

      \inferrule
      {\strut}
      {\funex{f}{x}{e} \in \mathcal{V}(\funty{\tau_1}{\tau_2}{\Theta}{\Theta_{\costfree}})}

      \inferrule
      {v_1 \in \mathcal{V}(\tau_1) \\ v_2 \in \mathcal{V}(\tau_2)}
      {\pairex{v_1}{v_2} \in \mathcal{V}(\tau_1 \times \tau_2)}

      \inferrule
      {v_1 \in \mathcal{V}(\tau_1)}
      {\inlex{v_1} \in \mathcal{V}(\tau_1 + \tau_2)}

      \inferrule
      {v_2 \in \mathcal{V}(\tau_2)}
      {\inrex{v_2} \in \mathcal{V}(\tau_1 + \tau_2)}

      \inferrule
      {v \in \mathcal{V}([\recty{\alpha}{\tau} / \alpha]\tau)}
      {\foldex{v} \in \mathcal{V}(\recty{\alpha}{\tau})}
    \end{mathpar}
  \end{subfigure}
  \caption{Types, values, and typing of values}\label{fig:typesandvalues}
\end{figure}

Following our running examples, we may define the types \(\mathsf{bool} \triangleq \sumty{\unitty}{\unitty}\), \(\mathsf{list}(\tau) \triangleq \recty{\alpha}{\sumty{\unitty}{\prodty{\tau}{\alpha}}}\), and \(\mathsf{tree}(\tau) \triangleq \recty{\beta}{\prodty{\tau}{\mathsf{list}(\beta)}} = \recty{\beta}{\prodty{\tau}{(\recty{\alpha}{\sumty{\unitty}{\prodty{\beta}{\alpha}}})}}\), with value constructors \(\mathsf{True} \triangleq \inlex{\unitex}\) and \(\mathsf{False} \triangleq \inrex{\unitex}\), \(\mathsf{Nil} \triangleq \foldex{(\inlex{\unitex})}\) and \(\mathsf{Cons}(h, t) \triangleq \foldex{(\inrex{(\pairex{h}{t})})}\), and \(\mathsf{Tree}(x, t) \triangleq \foldex{(\pairex{x}{t})}\), respectively. We use the notation \([v_1, \dots, v_n]\) to refer to \(\mathsf{Cons}(v_1, \dots (\mathsf{Cons}(v_n, \mathsf{Nil})))\).

\subsubsection*{Indices}
\label{sec:indices}

\begin{figure*}
  \centering
  \begin{subfigure}[t]{0.48\textwidth}
    \judgement{Base polynomial indices}{\(i \in \mathcal{I}(\tau)\)}
    \vfill
    \begin{mathpar}
      \inferrule
      {\strut}
      {\unitex \in \mathcal{I}(\mathbf{1})}

      \inferrule
      {i_1 \in \mathcal{I}(\tau_1) \\ i_2 \in \mathcal{I}(\tau_2)}
      {\pairex{i_1}{i_2} \in \mathcal{I}(\tau_1 \times \tau_2)}

      \inferrule
      {i_1 \in \mathcal{I}(\tau_1)}
      {\inlex{i_1} \in \mathcal{I}(\tau_1 + \tau_2)}

      \inferrule
      {i_2 \in \mathcal{I}(\tau_2)}
      {\inlex{i_2} \in \mathcal{I}(\tau_1 + \tau_2)}

      \inferrule
      {\strut}
      {\constidx \in \mathcal{I}(\funty{\tau_1}{\tau_2}{\Theta}{\Theta_{\costfree}})}

      \inferrule
      {i \in \mathcal{I}([\recty{\alpha}{\tau} / \alpha]\tau)}
      {\foldex{i} \in \mathcal{I}(\recty{\alpha}{\tau})}

      \inferrule
      {\strut}
      {\lastidx \in \mathcal{I}(\recty{\alpha}{\tau})}
    \end{mathpar}
    \vfill
  \end{subfigure}
  \hfill
  \begin{subfigure}[t]{0.48\textwidth}
    \judgement{Recursive occurrence indices}{\(\mkInd{\alpha}{\tau}{i}\)}
    \centering
    \begin{align*}
      \mkInd{\alpha}{\alpha}{i} &= \{i\} \\
      \mkInd{\alpha}{\topty}{i} &= \emptyset \\
      \mkInd{\alpha}{1}{i} &= \emptyset \\
      \mkInd{\alpha}{\tau_1 + \tau_2}{i} &= \{\inlex{j} \,|\, j \in \mkInd{\alpha}{\tau_1}{i}\} \mathop{\cup} \\
                                &\mathrel{\phantom{=}} \{\inrex{j} \,|\, j \in \mkInd{\alpha}{\tau_2}{i}\} \\
      \mkInd{\alpha}{\tau_1 \times \tau_2}{i} &= \{\pairex{j}{c} \,|\, j \in \mkInd{\alpha}{\tau_1}{i}, \\
      &\hspace{6.07em} c \in \constidxs{\tau_2}\} \mathop{\cup} \\ 
                                &\mathrel{\phantom{=}} \{\pairex{c}{j} \,|\, c \in \constidxs{\tau_1}, \\
      &\hspace{6.07em} j \in \mkInd{\alpha}{\tau_2}{i}\} \\
      \mkInd{\alpha}{\tau_1 \to \tau_2}{i} &= \emptyset \\
      \mkInd{\alpha}{\recty{\beta}{\tau}}{i} &= \{\foldex{j} \,|\, j \in \mkInd{\alpha}{[\topty/\beta]\tau}{i}\}
    \end{align*}
  \end{subfigure}
  \\
  \vspace{1em}
  \begin{subfigure}[b]{0.34\textwidth}
    \judgement{Constant index set}{\(\mathcal{C}(\tau)\)}
    \vspace{1.7em}
    \centering
    \begin{align*}
      \constidxs{\mathbf{1}} &= \{\unitex\} \\
      \constidxs{\tau_1 \times \tau_2} &= \{\pairex{i_1}{i_2} \,|\, i_1 \in \constidxs{\tau_1}, \\
                  &\hspace{6.55em} i_2 \in \constidxs{\tau_2}\} \\ 
      \constidxs{\tau_1 + \tau_2} &= \{\inlex{i_1} \,|\, i_1 \in \constidxs{\tau_1}\} \mathop{\cup} \\
                  &\mathrel{\phantom{=}} \{\inrex{i_2} \,|\, i_2 \in \constidxs{\tau_2}\} \\
      \constidxs{\tau_1 \to \tau_2} &= \{\constidx\} \\
      \constidxs{\recty{\alpha}{\tau}} &= \{\lastidx\} \\
      \constidxs{\topty} &= \{\lastidx\}
    \end{align*}
    \vspace{.7em}
  \end{subfigure}
  \hfill
  \begin{subfigure}[b]{0.62\textwidth}
    \judgement{Base polynomial evaluation}{\(\pot{\tau}{i}{v}\)}
    \begin{align*}
      \pot{\mathbf{1}}{\unitex}{\unitex} &= 1 \\
      \pot{\tau_1 \times \tau_2}{\pairex{i_1}{i_2}}{\pairex{v_1}{v_2}} &= \pot{\tau_1}{i_1}{v_1} \cdot \pot{\tau_2}{i_2}{v_2} \\
      \pot{\tau_1 + \tau_2}{\inlex{i_1}}{\inlex{v_1}} &= \pot{\tau_1}{i_1}{v_1} \\
      \pot{\tau_1 + \tau_2}{\inlex{i_1}}{\inrex{v_2}} &= 0 \\
      \pot{\tau_1 + \tau_2}{\inrex{i_2}}{\inlex{v_1}} &= 0 \\
      \pot{\tau_1 + \tau_2}{\inrex{i_1}}{\inrex{v_2}} &= \pot{\tau_2}{i_2}{v_2} \\
      \pot{\funty{\tau_1}{\tau_2}{\Theta}{\Theta_{\costfree}}}{\constidx}{\funex{f}{x}{e}} &= 1 \\
      \pot{\recty{\alpha}{\tau}}{\foldex{i}}{\foldex{v}} &= \pot{[\recty{\alpha}{\tau}/\alpha]\tau}{i}{v} \\
      &+
                                                           \textstyle\sum_{k \in \mkInd{\alpha}{\tau}{\foldex{i}}} \pot{[\recty{\alpha}{\tau}/\alpha]\tau}{k}{v} \\
      \pot{\recty{\alpha}{\tau}}{\lastidx}{\foldex{v}} &= 1
    \end{align*}
  \end{subfigure}
  \caption{Fundamental base polynomial index constructions.}
  \label{fig:indices}
\end{figure*}

Resource polynomials consist of a sum 
of ``monomial'' base polynomials with rational 
coefficients. We use indices to name those base 
polynomials. Figure~\ref{fig:indices} shows 
inference rules for the set of 
indices \(\mathcal{I}(\tau)\) for a given 
type \(\tau\). They nearly exactly mirror 
the syntactic values \(\mathcal{V}(\tau)\), with 
the addition of an ``\(\lastidx\)'' index for 
recursive types.\footnote{Several parts of the 
type system rely on describing
constant potential; we thus 
add \(\lastidx\) to do so for recursive types 
otherwise lacking such an index.} One possible 
intuition for an index is to view it like a 
pattern in a pattern match specifying a 
shape that values are compared against. 
However, matching a pattern
is a binary decision, whereas an 
index \emph{counts} occurrences in a value.

Following our running examples, 
both \(\mathsf{True}\) and \(\mathsf{False}\) are 
indices for \(\mathsf{bool}\) that match those values 
exactly; \(\mathsf{Nil}\) and \(\lastidx\) are 
indices for \(\mathsf{list}(\tau)\) that match 
against any list value exactly once; 
\(\mathsf{Cons}(\unitex, \mathsf{Nil})\) matches 
against any \(\mathsf{list}(\unitty)\) value
as many times as the length of the list; 
and \(\mathsf{Node}(\unitex, \mathsf{Nil})\) 
matches against any \(\mathsf{tree}(\unitty)\) value 
as many times as nodes in the tree.

\subsubsection*{Constant index set}
\label{sec:constindices}

A function is given in Figure~\ref{fig:indices} that defines a set of indices \(\mathcal{C}(\tau)\) for any type \(\tau\) such that the sum of their evaluation on any value of type \(\tau\) is exactly 1. The definition proceeds easily from the definition of index evaluation, which will be given shortly. (We also include a definition of \(\mathcal{C}\) for \(\topty\), a piece of syntax used in the course of the evaluation of \(\mathcal{M}\) that is substituted for the bound type variable when unfolding a recursive type, in order to only unfold each recursive type once.)

Following our running examples, we have \(\mathcal{C}(\mathsf{bool}) = \{\mathsf{True}, \mathsf{False}\}\), which indeed encompasses all possible values of type \(\mathsf{bool}\), and \(\mathcal{C}(\mathsf{list}(\tau)) = \mathcal{C}(\mathsf{tree}(\tau)) = \{\lastidx\}\), which forms the set of constant indices for any recursive type.

\subsubsection*{Recursive occurrence index set}
\label{sec:mkind}

This is the key insight that enables the extension to 
more general algebraic, mutually inductive types. The 
function \(\mkInd{\alpha}{\tau}{i}\) defined in 
Figure~\ref{fig:indices}, where \(\tau\) is a type with
no free type variables except for \(\alpha\) and \(i\) 
is an index for the type that \(\alpha\) represents, 
returns a set of indices that correspond to placing \(i\) 
at every occurrence of \(\alpha\) in \(\tau\). 
In more detail, here are the function's cases:
\begin{itemize}[left=\widthof{\(\tau_1 \to \tau_2\)}]
  \item[\(\alpha\).] We have found an occurence of \(\alpha\), so \(i\) goes here.
  \item[\(\topty\).] This represents some occurrence of a recursive type \emph{other} than the one \(\alpha\) refers to, having been substituted in during unfolding of said recursive type. Any occurrence of \(\alpha\) within that recursive type has already been handled by the \(\mathsf{fold}\) that is applied at the place of unfolding.
  \item[\(\unitty\).] No occurences of \(\alpha\) to be found here.
  \item[\(\sumty{\tau_1}{\tau_2}\).] No matter whether the value turns out to be a left or right injection, there could be a value of type \(\alpha\) within either, so we consider both cases.
  \item[\(\prodty{\tau_1}{\tau_2}\).] Here \(\alpha\) could occur inside \emph{both} projections of the pair, but we only want to consider one at a time, so we consider finding values in the first projection with arbitrary contents in the second, or vice versa.
  \item[\(\tau_1 \to \tau_2\).] We treat functions opaquely, with no \(\alpha\) values.
  \item[\(\recty{\beta}{\tau}\).] Here is the case critical for handling nested recursive types. As observed in \cref{sec:rosetrees}, introducing a \(\mathsf{fold}\) in the index here will cause this recursive process to happen over again during the \emph{evaluation} of the index, but for \(\beta\) instead of \(\alpha\). This sort of ``delaying'' of the recursive unrolling is what enables the nested recursive evaluation without having this process generate an infinite number of indices.
\end{itemize}

Following our running examples, we have \[\mkInd{\alpha}{\mathsf{bool}}{i} = \emptyset,\] because \(\alpha\) is not 
free in \(\mathsf{bool}\); \[\begin{array}{l} \mkInd{\alpha}{\sumty{\unitty}{\prodty{\mathsf{bool}}{\alpha}}}{i} = \\[0.2em] \hspace{2em}\{\inrex{(\pairex{\mathsf{True}}{i})}, \inrex{(\pairex{\mathsf{False}}{i})}\} \end{array}\] (where \(\sumty{\unitty}{\prodty{\mathsf{bool}}{\alpha}}\) is \(\mathsf{list}(\mathsf{bool})\) 
with the recursive binder stripped), because all 
recursive occurrences in a list of bools are at the 
tail of a cons cell with 
a bool
as the head; and \[\begin{array}{l} \mkInd{\beta}{\prodty{\mathsf{bool}}{\mathsf{list}(\beta)}}{i} = \\[0.2em] \hspace{2em} \{\pairex{\mathsf{False}}{\foldex{(\inrex{(\pairex{i}{\lastidx})})}}, \\ \hspace{2em}\phantom{\{} \pairex{\mathsf{True}}{\foldex{(\inrex{(\pairex{i}{\lastidx})})}}\} \end{array} \] (where \(\prodty{\mathsf{bool}}{\mathsf{list}(\beta)}\) is \(\mathsf{tree}(\mathsf{bool})\) with the recursive binder stripped), because all recursive occurrences in a rose tree of bools are in \emph{some} cons cell of the list of children. It's worth examining the last example a little more closely to grok the intuition for how this works for mutually inductive types: though the number of direct recursive occurrences of rose trees is unbounded and thus at first glance might require infinite indices to represent, the \(\foldex{}\) corresponding to the list \emph{itself} finds all of its recursive occurrences, allowing a finite number of indices to capture any number of descendants.

\subsubsection*{Index evaluation}
\label{sec:indexeval}

Finally, we reach the definition of the index evaluation function \(\pot{\tau}{i}{v}\) in Figure~\ref{fig:indices}, which evaluates the index \(i\) for type \(\tau\) on value \(v\). This gives the result of ``counting'' the number of matches of \(i\) in \(v\). The definition is straightforward except when evaluating an index \(\foldex{i}\), so we will just explain that rule in more detail. When evaluating index \(\foldex{i}\) on a value \(\foldex{v}\) of type \(\recty{\alpha}{\tau}\), we want to find all possible matches of \(i\) in \(v\). The first place those could occur is directly at the value \(v\), which the term \(\pot{[\recty{\alpha}{\tau}/\alpha]\tau}{i}{v}\) accounts for. However, we also want to consider matches in the recursive positions of the type within \(v\); as explained above, these positions are exactly what \(\mathcal{M}\{\alpha. \tau\}\) identifies, and we want to continue looking for all matches of \(\foldex{i}\) at those positions, so we sum the results of evaluating each index in \(\mkInd{\alpha}{\tau}{\foldex{i}}\) to count the recursive occurrences.

The results of evaluating indices on a few values of our example types are illustrated in Figure~\ref{fig:exindices}.

\newcommand{\treeone}{\begin{tikzpicture}[nodes={shape=circle,fill,inner sep=1pt}, node distance=1em] \node[draw=black] (A) {}; \node[below left=of A] (B) {}; \node[below right=of A] (C) {}; \draw (A) -- (B); \draw (A) -- (C); \end{tikzpicture}}
\newcommand{\treetwo}{\begin{tikzpicture}[nodes={shape=circle,fill,inner sep=1pt}, node distance=1em] \node[draw=black] (A) {}; \node[below left=of A] (B) {}; \node[below=of A] (C) {}; \node[below right=of A] (D) {}; \node[below=of B] (E) {}; \node[below=of D] (F) {}; \draw (A) -- (B); \draw (A) -- (C); \draw (A) -- (D); \draw (B) -- (E); \draw (D) -- (F); \end{tikzpicture}}

\begin{figure}
\renewcommand{\arraystretch}{1.1}
\centering
\begin{tabular}[t]{c|c|c|c}
  Type & Index & Value & Result \\
  \(\tau\) & \(i \in \mathcal{I}(\tau)\) & \(v \in \mathcal{V}(\tau)\) & \(\pot{\tau}{i}{v}\) \\\hline
  \(\mathsf{bool}\) & \(\mathsf{False}\) & \(\mathsf{False}\) & 1 \\
       & & \(\mathsf{True}\) & 0 \\[0.3em]
       & \(\mathsf{True}\) & \(\mathsf{False}\) & 0 \\
       & & \(\mathsf{True}\) & 1 \\[0.3em] \hline
  \(\mathsf{list}(\unitty)\) & \([]\) & \([\unitex; \unitex]\) & 1 \\
   & & \([\unitex; \unitex; \unitex; \unitex]\) & 1 \\[0.3em]
   & \([\unitex]\) & \([\unitex; \unitex]\) & 2 \\
   & & \([\unitex; \unitex; \unitex; \unitex]\) & 4 \\[0.3em]
   & \([\unitex; \unitex]\) & \([\unitex; \unitex]\) & 1 \\
       & & \([\unitex; \unitex; \unitex; \unitex]\) & 6 \\[0.3em] \hline
  \(\mathsf{tree}(\unitty)\) & \(\lastidx\) & \parbox{2.2em}{\treeone} & 1 \\
  & & \parbox{2.2em}{\treetwo} & 1 \\
  & \(\mathsf{Tree}(\unitex, [])\) & \parbox{2.2em}{\treeone} & 3 \\
  & & \parbox{2.2em}{\treetwo} & 6 \\
  & \(\mathsf{Tree}(\unitex, [\mathsf{Tree}(\unitex, [])])\) & \parbox{2.2em}{\treeone} & 2 \\
  & & \parbox{2.2em}{\treetwo} & 7 \\
\end{tabular}
  \caption{Example index evaluation results.}
  \label{fig:exindices}
\end{figure}

\vspace{1em}

With index evaluation defined, we may characterize the key property of the constant index set by induction on \(\tau\).
\begin{lemma}[Constant indices sum]
  For all types \(\tau\) and values \(v \in \mathcal{V}(\tau)\), \(\sum_{i \in \mathcal{C}(\tau)} \pot{\tau}{i}{v} = 1.\)
\end{lemma}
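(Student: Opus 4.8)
The plan is to prove the identity by structural induction on the type $\tau$, case-analyzing on the syntactic form of $\tau$ and, in tandem, on the form of the value $v \in \mathcal{V}(\tau)$, whose top-level constructor is forced by the value-typing rules of Figure~\ref{fig:typesandvalues}. In each case I read off $\constidxs{\tau}$ from its defining equation and evaluate the indices it contains against $v$ using the clauses for $\pot{\tau}{i}{v}$; only the two inductive cases will invoke the induction hypothesis, applied to the strictly smaller component types.

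The base cases are immediate. For $\tau = \unitty$ the only value is $\unitex$, we have $\constidxs{\unitty} = \{\unitex\}$, and $\pot{\unitty}{\unitex}{\unitex} = 1$. For $\tau = \funty{\tau_1}{\tau_2}{\Theta}{\Theta_{\costfree}}$ the set $\constidxs{\tau}$ is the singleton $\{\constidx\}$ and $\pot{\tau}{\constidx}{v} = 1$ for the function value $v$. The decisive base case is $\tau = \recty{\alpha}{\tau'}$: here $v = \foldex{w}$ and $\constidxs{\recty{\alpha}{\tau'}} = \{\lastidx\}$, so the sum collapses to the single term $\pot{\recty{\alpha}{\tau'}}{\lastidx}{\foldex{w}}$, which the last evaluation clause defines to be $1$. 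Crucially, this case needs no induction hypothesis and never descends under the $\mu$-binder: the $\lastidx$ index is designed precisely so that the constant-index sum for a recursive type is witnessed by a single index evaluating to $1$, entirely sidestepping the $\mkInd{\alpha}{\tau'}{\cdot}$ machinery that defines $\pot{}{\foldex{i}}{\cdot}$.

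The two inductive cases carry the real (if modest) content. For $\tau = \prodty{\tau_1}{\tau_2}$ we have $v = \pairex{v_1}{v_2}$, and $\constidxs{\prodty{\tau_1}{\tau_2}}$ ranges over all $\pairex{i_1}{i_2}$ with $i_1 \in \constidxs{\tau_1}$ and $i_2 \in \constidxs{\tau_2}$; using the multiplicative clause $\pot{\prodty{\tau_1}{\tau_2}}{\pairex{i_1}{i_2}}{\pairex{v_1}{v_2}} = \pot{\tau_1}{i_1}{v_1} \cdot \pot{\tau_2}{i_2}{v_2}$, the double sum factors as a product of two single sums, each equal to $1$ by the induction hypothesis, giving $1 \cdot 1 = 1$. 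For $\tau = \sumty{\tau_1}{\tau_2}$ I case on $v$: if $v = \inlex{v_1}$ with $v_1 \in \mathcal{V}(\tau_1)$, then every $\inrex{i_2}$ index evaluates to $0$ on a left injection, so only the $\inlex{i_1}$ indices survive, and by the matching clause $\pot{\sumty{\tau_1}{\tau_2}}{\inlex{i_1}}{\inlex{v_1}} = \pot{\tau_1}{i_1}{v_1}$, whose sum over $i_1 \in \constidxs{\tau_1}$ is $1$ by the induction hypothesis; the case $v = \inrex{v_2}$ is symmetric.

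Regarding the main obstacle: there is essentially none to clear once the right viewpoint is adopted, since the lemma is arranged so that the only cases requiring the induction hypothesis are products and sums, where $\pot{}{\cdot}{\cdot}$ is respectively multiplicative and injection-selective. The single point demanding care is the sum case, where I must confirm that the cross-terms (an $\inrex{}$ index against an $\inlex{}$ value and vice versa) vanish, so that exactly the matching-side constant-index sum remains; this is precisely what the two $0$-valued clauses of $\pot{}{\cdot}{\cdot}$ on sums guarantee. It is worth emphasizing that the difficulty one might anticipate, namely handling nested recursive types, never arises here because $\constidxs{\recty{\alpha}{\tau'}} = \{\lastidx\}$ short-circuits the recursion; this is the structural reason the statement admits a one-line recursive case.
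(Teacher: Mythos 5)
Your proof is correct and matches the paper's approach exactly: the paper gives only the remark that the lemma follows ``by induction on \(\tau\)'', and your case analysis (multiplicativity for products, vanishing cross-terms for sums, and the \(\lastidx\) clause short-circuiting the recursive case) is precisely the intended argument spelled out. The only cases you omit, type variables and \(\topty\), are vacuous since \(\mathcal{V}(\alpha)\) and \(\mathcal{V}(\topty)\) are empty, so nothing is missing.
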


Note that distinct indices may sometimes refer to the same base polynomial. 
For example, the two indices \(\lastidx\) and \(\mathsf{Nil}\) for the type \(\mathsf{list}(\tau)\) 
both represent the constant function. 

\subsubsection*{Resource polynomials, proper}
\label{sec:respolysproper}

We have up to this point described the base polynomials by way of specification of their syntactic indices; the set of \emph{resource polynomials} \(\mathcal{R}(\tau)\) for a type \(\tau\) are then the linear combinations of base polynomials with nonnegative rational coefficients.

\subsection{Annotations}
\label{sec:annotations}

Though resource polynomials are the objects we really care about for analysis, the most useful representation of resource polynomials is a reified form we call \emph{annotations} \(\mathcal{A}(\tau)\).
\begin{definition}[Annotation]
  Let \(\mathcal{A}(\tau)\) denote the free \(\Qnn\)-semimodule with \(\mathcal{I}(\tau)\) as a basis. Then an \emph{annotation} for type \(\tau\) is an element of \(\mathcal{A}(\tau)\).
\end{definition}
Semimodules share the same definition as vector spaces, except for being defined over a semiring instead of a field. Because of this difference, in generality they lack much of the structure of vector spaces; however, \emph{free} semimodules can be thought to behave fairly similarly because they have bases. A longer overview of semimodules is available in Appendix~\ref{app:semimodules}.

We denote annotations as \(P\) or \(Q\) and
define \(p_i\) to be the coefficient corresponding to
index \(i\). Then we can recover the resource
polynomial as the potential
function \[ \Pot{\tau}{P}{v} \triangleq \textstyle\sum_{i \in \mathcal{I}(\tau)} p_i \cdot \pot{\tau}{i}{v}. \] Note that \(P \mapsto \Pot{\tau}{P}{\cdot}\) is a linear map from annotations to resource polynomials and that \(P \mapsto \Pot{\tau}{P}{v}\) is a linear form. In addition to basic operations on semimodules, we also use the following notions on annotations:
\begin{itemize}
  \item Set coercion to annotation for a finite set of indices \(B \subseteq \mathcal{I}(\tau)\), where \(b_i = 1\) if \(i \in B\) and 0 otherwise.
  \item Preorder \(P \leq Q\), defined by \(p_i \leq q_i\) for all \(i\). Note that \(P \leq Q\) is equivalent to the extension order, i.e. there exists an \(R\) such that \(Q = P + R\), and thus \(\Pot{\tau}{\cdot}{v}\) respects the order.
  \item Also note that because \(\mathcal{I}(\prodty{\tau_1}{\tau_2}) \cong \mathcal{I}(\tau_1) \times \mathcal{I}(\tau_2)\), we have \(\mathcal{A}(\prodty{\tau_1}{\tau_2}) \cong \mathcal{A}(\tau_1) \otimes \mathcal{A}(\tau_2)\). Then \(\Pot{\tau_1 \times \tau_2}{P \otimes Q}{\pairex{v_1}{v_2}} = \Pot{\tau_1}{P}{v_1} \cdot \Pot{\tau_2}{Q}{v_2}\). We will use the notations \(\mathsf{pair} : \mathcal{A}(\tau_1) \otimes \mathcal{A}(\tau_2) \to \mathcal{A}(\prodty{\tau_1}{\tau_2})\) and \(\mathsf{pair}^{-1} : \mathcal{A}(\prodty{\tau_1}{\tau_2}) \to \mathcal{A}(\tau_1) \otimes \mathcal{A}(\tau_2)\) to explicitly note the two sides of this isomorphism.
  \item Similarly, we have linear maps \(\mathsf{inl} : \mathcal{A}(\tau_1) \to \mathcal{A}(\sumty{\tau_1}{\tau_2})\) and its retraction \(\mathsf{inl}^{-1} : \mathcal{A}(\sumty{\tau_1}{\tau_2}) \to \mathcal{A}(\tau_1)\), and similarly for \(\mathsf{inr}\). Explicitly, \(\mathsf{inl}^{-1}\) is the linear map with defining equations \(\mathsf{inl}^{-1}(\inlex{i}) = i\) and \(\mathsf{inl^{-1}}(\inrex{i}) = 0\).
  \item The notation \(\text{id}_{\mathcal{A}(\tau)}\) refers to the identity linear map \(P \mapsto P : \mathcal{A}(\tau) \to \mathcal{A}(\tau)\).
\end{itemize}

Then, if we want, say, \(n^2\) potential for a unit list $\ell$ of length \(n\), we can use the annotation \(P = 1 \cdot [\unitex] + 2 \cdot [\unitex; \unitex]\), so that
\begin{align*}
  \Pot{\mathsf{list}(\unitty)}{P}{\ell} 
    &= 1 \!\cdot\! \pot{\mathsf{list}(\unitty)}{[\unitex]}{\ell}
     + 2 \!\cdot\! \pot{\mathsf{list}(\unitty)}{[\unitex; \unitex]}{\ell} \\
    &= n + 2 \tbinom{n}{2} = n^2.
\end{align*}

\subsubsection*{Shifting}
\label{sec:shifting}

A key requirement for our resource polynomials is the ability to fold and unfold recursive values while maintaining equal potential. Maintaining this ability was a key design constraint while constructing our system. We can accomplish this with the \emph{additive shift} operator.
\begin{definition}[Additive shift operator]
  Let \(\recty{\alpha}{\tau}\) be a recursive type. Then the \emph{additive shift operator}
 \(\shift\) is the linear map \(\shift : \mathcal{A}(\recty{\alpha}{\tau}) \to \mathcal{A}([\recty{\alpha}{\tau}/\alpha]\tau)\) corresponding to the function from basis elements \(\mathcal{I}(\recty{\alpha}{\tau})\) to \(\mathcal{A}([\recty{\alpha}{\tau}/\alpha]\tau)\) defined by
  \begin{align*}
    \shift\, \lastidx \defd \mathcal{C}([\recty{\alpha}{\tau}/\alpha]\tau) 
    & & 
    \shift (\foldex{i}) \defd i + \mkInd{\alpha}{\tau}{\foldex{i}}.
  \end{align*}
\end{definition}
In words, \(\lastidx\) is the constant index, and \(\foldex{i}\) refers to evaluation at both the current constructor (the \(i\) term) as well as all immediate children (the \(\mkInd{\alpha}{\tau}{\foldex{i}}\) term). The key property we desire for this operator is as follows:
\begin{theorem}[Shift preserves potential]
  For any \(P \in \mathcal{A}(\recty{\alpha}{\tau})\) and \(v \in \mathcal{V}([\recty{\alpha}{\tau}/\alpha]\tau)\), \[\Pot{\recty{\alpha}{\tau}}{P}{\foldex{v}} = \Pot{[\recty{\alpha}{\tau}/\alpha]\tau}{\shift P}{v}.\]
\end{theorem}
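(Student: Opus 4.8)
The plan is to use linearity to reduce the claim to a check on basis elements. Both sides are linear forms in the annotation $P$: the left-hand side $\Pot{\recty{\alpha}{\tau}}{\cdot}{\foldex{v}}$ is a linear form by the remark following the definition of $\Phi$, and the right-hand side $P \mapsto \Pot{[\recty{\alpha}{\tau}/\alpha]\tau}{\shift P}{v}$ is the composite of the linear map $\shift$ with the linear form $\Pot{[\recty{\alpha}{\tau}/\alpha]\tau}{\cdot}{v}$, hence is itself linear. Because annotations live in a free semimodule and therefore have finite support, any $P$ is a finite $\Qnn$-combination of basis indices, so it suffices to verify the identity when $P$ ranges over the basis $\mathcal{I}(\recty{\alpha}{\tau})$, i.e. when $P = \foldex{i}$ or $P = \lastidx$. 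I would stress that this reduction relies on finite support rather than on $\mathcal{I}(\recty{\alpha}{\tau})$ being finite, which it generally is not.

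For the case $P = \foldex{i}$, I expect the verification to be purely definitional, since the definition of $\shift$ was engineered precisely to mirror the recursive evaluation clause. Unfolding that clause gives
\[\Pot{\recty{\alpha}{\tau}}{\foldex{i}}{\foldex{v}} = \pot{[\recty{\alpha}{\tau}/\alpha]\tau}{i}{v} + \textstyle\sum_{k \in \mkInd{\alpha}{\tau}{\foldex{i}}} \pot{[\recty{\alpha}{\tau}/\alpha]\tau}{k}{v},\]
while applying linearity of $\Phi$ to $\shift(\foldex{i}) = i + \mkInd{\alpha}{\tau}{\foldex{i}}$ (with the set coerced to an annotation) yields the identical expression. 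Implicit here is the well-definedness of $\shift$, namely that every index in $\mkInd{\alpha}{\tau}{\foldex{i}}$ lies in $\mathcal{I}([\recty{\alpha}{\tau}/\alpha]\tau)$ so that the target annotation is legal; I would discharge this separately by a routine induction on $\tau$ showing that $\mkInd{\alpha}{\tau}{j}$ produces indices of type $[\recty{\alpha}{\tau}/\alpha]\tau$ whenever $j$ is an index for the type $\alpha$ denotes.

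For the case $P = \lastidx$, the left-hand side is $\pot{\recty{\alpha}{\tau}}{\lastidx}{\foldex{v}} = 1$ by its evaluation clause, and the right-hand side is $\Pot{[\recty{\alpha}{\tau}/\alpha]\tau}{\constidxs{[\recty{\alpha}{\tau}/\alpha]\tau}}{v}$, which after expanding the coerced set equals $\sum_{j \in \constidxs{[\recty{\alpha}{\tau}/\alpha]\tau}} \pot{[\recty{\alpha}{\tau}/\alpha]\tau}{j}{v}$. This is exactly $1$ by the Constant indices sum lemma, applied to the type $[\recty{\alpha}{\tau}/\alpha]\tau$ and the value $v \in \mathcal{V}([\recty{\alpha}{\tau}/\alpha]\tau)$ supplied by hypothesis. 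With both basis cases matching, linearity closes the argument.

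The genuine content sits entirely in the definitions: the recursive clause of $\phi$ and the defining equation of $\shift$ are two presentations of the same sum, and the only nontrivial external input is the Constant indices sum lemma invoked in the $\lastidx$ case. Consequently I do not anticipate a substantive obstacle in the argument itself; the only points demanding care are the well-definedness/typing sanity check for $\mkInd{\alpha}{\tau}{\cdot}$ noted above and the justification of the basis reduction via finite support. If anything, the ``hard part'' was upstream, in choosing the definitions so that this theorem becomes a definitional unfolding.
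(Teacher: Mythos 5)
Your proposal is correct and follows the paper's proof exactly: reduce to basis elements by linearity of both sides, then observe that the $\foldex{i}$ case matches the recursive evaluation clause of $\phi$ definitionally and the $\lastidx$ case follows from the Constant indices sum lemma. The paper compresses the basis-element check to ``it follows directly,'' whereas you spell out both cases, but the argument is the same.
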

\begin{proof}
  This is equivalent to the statement of equality of linear forms \(\Pot{\recty{\alpha}{\tau}}{\cdot}{\foldex{v}} = \Pot{[\recty{\alpha}{\tau}/\alpha]\tau}{\shift \cdot}{v}\). By linearity, it suffices to show this on basis elements \(\mathcal{I}(\recty{\alpha}{\tau})\), at which point it follows directly.
\end{proof}
It can additionally be shown that shifting is in fact a linear isomorphism \(\mathcal{A}(\recty{\alpha}{\tau}) \cong \mathcal{A}([\recty{\alpha}{\tau}/\alpha]\tau)\).

\subsubsection*{Sharing}
\label{sec:sharing}

Since we need to be able to use a value multiple times, we need to be able to split its potential across multiple uses. Though this may sound simple at first, subtleties arise due to the multivariate setting: what if the potential between them ends up intertwined? For this we need the sharing operator, a bilinear map \(\annotshare : \mathcal{A}(\tau) \to \mathcal{A}(\tau) \to \mathcal{A}(\tau)\). Similarly to shifting, it suffices to define this just on basis elements. The full definition is available in Appendix~\ref{app:sharing}, but we consider it a key contribution of the paper, so we highlight the definition for the critical case, sharing two \(\mathsf{fold}\) indices:
\begin{align*}
   \annotshare_{\recty{\alpha}{\tau}}(\foldex{i}, \foldex{j}) &\defd \foldex{\left(\annotshare_{[\recty{\alpha}{\tau}/\alpha]\tau}(i, j)\right)} \\
   &+ \foldex{\left(\annotshare_{[\recty{\alpha}{\tau}/\alpha]\tau}(\mkInd{\alpha}{\tau}{\foldex{i}}, j)\right)} \\
   &+ \foldex{\left(\annotshare_{[\recty{\alpha}{\tau}/\alpha]\tau}(i, \mkInd{\alpha}{\tau}{\foldex{j}})\right)} \\
   &+ \foldex{\left(\mathcal{N}\{\alpha.\tau\}(\foldex{i}, \foldex{j})\right)}
\end{align*}
Here \(\mathcal{N}\) is like \(\mathcal{M}\), but places \emph{two} indices at \emph{two} occurrences of \(\alpha\). Intuitively, this says that a pair of indices can apply in the same value in any of these four categories of places: both at the current value, the left at a child and the right at the current value, the left at the current value and the right at a child, or both at a child.

The sharing operator satisfies the key property stated below:
\begin{theorem}[Share preserves potential]
  For \(P, Q \in \mathcal{A}(\tau)\) and \(v \in \mathcal{V}(\tau)\), \(\Pot{\tau}{P}{v} \cdot \Pot{\tau}{Q}{v} = \Pot{\tau}{P \annotshare Q}{v}\).
\end{theorem}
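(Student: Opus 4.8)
The plan is to mirror the proof of the shift theorem: reduce to a statement about base polynomials and then induct. Since $P \mapsto \Pot{\tau}{P}{v}$ is linear and $\annotshare$ is bilinear, both sides of the claimed identity are bilinear in $(P,Q)$, so it suffices to verify the equation on basis elements, i.e.\ to show
\[
  \pot{\tau}{i}{v} \cdot \pot{\tau}{j}{v} = \Pot{\tau}{\annotshare(i,j)}{v}
  \qquad \text{for all } i, j \in \mathcal{I}(\tau),\ v \in \mathcal{V}(\tau).
\]
I would prove this by well-founded induction on the structure of the value $v$, stated uniformly over all types $\tau$ with $v \in \mathcal{V}(\tau)$ and all $i, j \in \mathcal{I}(\tau)$. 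Inducting on $v$ rather than on $\tau$ is essential for the isorecursive case: unfolding replaces $\recty{\alpha}{\tau}$ by the larger type $[\recty{\alpha}{\tau}/\alpha]\tau$, so $\tau$ does not decrease, whereas the immediate subvalue exposed by a $\mathsf{fold}$ is strictly smaller and lets the induction hypothesis fire at the unfolded type.

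The non-recursive cases are routine and follow the definition of $\annotshare$ on each constructor. For $\unitex$ and for the arrow index $\constidx$ both sides are $1$. For a product, the tensor isomorphism $\mathcal{A}(\prodty{\tau_1}{\tau_2}) \cong \mathcal{A}(\tau_1) \otimes \mathcal{A}(\tau_2)$ together with $\Pot{\prodty{\tau_1}{\tau_2}}{P \otimes Q}{\pairex{v_1}{v_2}} = \Pot{\tau_1}{P}{v_1} \cdot \Pot{\tau_2}{Q}{v_2}$ lets me factor each potential, reassociate the four factors, and apply the induction hypothesis to the strictly smaller components $v_1$ and $v_2$. For a sum, matching injections reduce to a sharing on the smaller inner value (induction hypothesis), while mismatched injections make both sides $0$: $\pot{\tau_1 + \tau_2}{\inlex{i_1}}{\inrex{v_2}} = 0$, and $\annotshare$ sends a pair of mismatched injections to $0$.

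The crux is the isorecursive case, with value $\foldex{v}$ and indices $\foldex{i}, \foldex{j}$ (the cases involving $\lastidx$ are immediate, since $\lastidx$ is a constant index and $\annotshare$ leaves the other argument essentially untouched). Here I would route the argument through the shift theorem rather than unrolling the fold-evaluation rule by hand. Writing $\sigma$ for $[\recty{\alpha}{\tau}/\alpha]\tau$, the shift theorem rewrites each potential $\Pot{\recty{\alpha}{\tau}}{\cdot}{\foldex{v}}$ as $\Pot{\sigma}{\shift \cdot}{v}$; since $v$ is a strict subvalue of $\foldex{v}$, the induction hypothesis gives the basis-level sharing identity at $\sigma$ on $v$, and extending it by bilinearity collapses the product $\Pot{\sigma}{\shift \foldex{i}}{v} \cdot \Pot{\sigma}{\shift \foldex{j}}{v}$ into the single term $\Pot{\sigma}{\annotshare_{\sigma}(\shift \foldex{i}, \shift \foldex{j})}{v}$. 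Applying the shift theorem once more to the right-hand side, the entire isorecursive case reduces to the purely algebraic identity that \emph{sharing commutes with shifting},
\[
  \shift\bigl(\annotshare_{\recty{\alpha}{\tau}}(\foldex{i}, \foldex{j})\bigr)
  = \annotshare_{\sigma}\bigl(\shift \foldex{i},\, \shift \foldex{j}\bigr),
\]
an equation in which no values appear.

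The main obstacle is precisely this commutation identity, and it is where the four-summand definition of $\annotshare$ on folds---especially the $\mathcal{N}$ term---does its work. Expanding $\shift \foldex{i} = i + \mkInd{\alpha}{\tau}{\foldex{i}}$ and $\shift \foldex{j} = j + \mkInd{\alpha}{\tau}{\foldex{j}}$, the right-hand side splits by bilinearity into four contributions classified by whether each index sits at the current constructor or at a recursive child; pushing $\shift$ through each $\mathsf{fold}$ on the left splits it the same way. The first three summands match directly, and what remains is to show that the ``both at a child'' contribution $\annotshare_{\sigma}(\mkInd{\alpha}{\tau}{\foldex{i}}, \mkInd{\alpha}{\tau}{\foldex{j}})$ is accounted for exactly by the $\mathcal{N}$ term together with the child-descending corrections generated by shifting the other three summands. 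Intuitively, $\mathcal{N}$ must count the pairs landing at two \emph{incomparable} recursive occurrences, while pairs at the same occurrence or in an ancestor/descendant relationship are absorbed by the recursion and by the first three terms---which is why $\mathcal{N}$ vanishes for lists but not for trees. Verifying this requires an auxiliary induction on the body type $\tau$ characterizing how $\mathcal{N}$ distributes two indices over two occurrences of $\alpha$ and how that interacts with $\mkInd$ and with sharing at products and sums. Making this bookkeeping both exhaustive and non-overlapping---so that each pair of matches is counted exactly once---is the delicate heart of the argument.
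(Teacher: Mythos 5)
The paper states this theorem without an accompanying proof, so there is no official argument to compare against; I can only assess your proposal on its own terms. Your reduction is sound as far as it goes: both sides are bilinear in \((P,Q)\), so restricting to basis elements is legitimate; inducting on the value rather than the type is the right move for the isorecursive case; the unit, arrow, sum, and product cases go through exactly as you describe; and routing the fold case through the shift theorem is a clean way to isolate the real content. Writing \(\sigma\) for \([\recty{\alpha}{\tau}/\alpha]\tau\), what remains after the two applications of shift is indeed the commutation identity \(\shift\bigl(\annotshare_{\recty{\alpha}{\tau}}(\foldex{i},\foldex{j})\bigr) = \annotshare_{\sigma}(\shift\foldex{i},\shift\foldex{j})\) --- or, more precisely, only its image under \(\Pot{\sigma}{\cdot}{v}\), which is strictly weaker and is worth noting since the paper observes that distinct indices can denote the same base polynomial. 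A concrete check on \(\mathsf{list}(\unitty)\) with \(i = j = [\unitex]\) confirms the syntactic identity there, reproducing \(n^2 = n + 2\binom{n}{2}\) on both sides.

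The gap is that this commutation identity is where essentially all of the combinatorial content of the theorem lives, and you identify it but do not prove it. Expanding \(\shift\foldex{i} = i + \mkInd{\alpha}{\tau}{\foldex{i}}\) on the right and pushing \(\shift\) through the four \(\mathsf{fold}\) summands on the left, the first three contributions match, and you are left needing \(\annotshare_{\sigma}(\mkInd{\alpha}{\tau}{\foldex{i}}, \mkInd{\alpha}{\tau}{\foldex{j}})\) to decompose exactly into the \(\mathcal{N}\) term plus the \(\mkInd\)-corrections generated by shifting the other summands. Establishing that requires a genuine auxiliary lemma characterizing \(\mathcal{N}\) (two indices placed at two distinct occurrences of \(\alpha\)) and its interaction with \(\mkInd\) and with sharing at products --- and, critically, at nested recursive types, where the \(\recty{\beta}{\sigma}\) clause of \(\mathcal{N}\) recurses into \(\mathcal{N}\) for the \emph{inner} binder with substituted types, so your proposed ``auxiliary induction on the body type'' needs a carefully chosen well-founded measure (the same delaying device that makes \(\mkInd\) terminate). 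Until that lemma is stated and discharged --- or the semantic version is proved by a further appeal to the induction hypothesis on \(v\) --- the fold case, and hence the theorem, is not established. Everything up to that point is correct.
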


\subsection[Comparison to Hoffmann et al. {[2011]}]{Comparison to Hoffmann et al.~\cite{hoffmann2011multivariate}}
From this description of potential functions, it is not entirely clear whether our described resource polynomials are a \emph{generalization} of previous multivariate AARA potentials as in Hoffmann et al.~\cite{hoffmann2011multivariate}, or whether it is instead simply \emph{different}. We show it is the former with the following theorem:
\begin{theorem}
  All resource polynomials representable in Hoffmann et al.~\cite{hoffmann2011multivariate} are also representable in our system.
\end{theorem}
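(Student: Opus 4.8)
The plan is to exhibit an explicit, compositional translation $\llbracket\cdot\rrbracket$ sending the types, values, and base-polynomial indices of Hoffmann et al.\ into ours, and to prove that it preserves the value of every base polynomial; representability of arbitrary resource polynomials then follows immediately by linearity, using that both systems take nonnegative rational coefficients. First I would recall the relevant data of their system: their types are built from base types, products, and (possibly nested) lists, and their base polynomials are named by indices that are trivial at base types, pairs $(i_1,i_2)$ at products (evaluating multiplicatively), and finite sequences $[i_1,\dots,i_k]$ of element indices at a list type $L(A)$, where the sequence index evaluates on a list by summing, over all strictly increasing tuples of positions, the product of the element-index evaluations. On types I would set $\llbracket A_1 \times A_2\rrbracket = \prodty{\llbracket A_1\rrbracket}{\llbracket A_2\rrbracket}$ and $\llbracket L(A)\rrbracket = \recty{\alpha}{\sumty{\unitty}{\prodty{\llbracket A\rrbracket}{\alpha}}}$, together with the evident bijection on values (pairs to pairs, lists to lists), so that the two potential functions can be compared on corresponding inputs.

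The heart of the argument is the index translation together with evaluation preservation, proved by induction on the structure of the Hoffmann type. A product index $(i_1,i_2)$ maps to $\pairex{\llbracket i_1\rrbracket}{\llbracket i_2\rrbracket}$, whose evaluation factors as a product by the pair rule for $\phi$ in Figure~\ref{fig:indices} (equivalently the isomorphism $\mathcal{A}(\prodty{\tau_1}{\tau_2})\cong\mathcal{A}(\tau_1)\otimes\mathcal{A}(\tau_2)$), so this case is immediate from the induction hypothesis. A list index $[i_1,\dots,i_k]$ maps to the nested index $\mathsf{Cons}(\llbracket i_1\rrbracket,\mathsf{Cons}(\dots,\mathsf{Cons}(\llbracket i_k\rrbracket,\lastidx)))$; in particular the empty sequence maps to $\lastidx$, which evaluates to the constant $1$, matching the empty-selection count. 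Since each index is sent to a single index---a monomial-to-monomial translation landing in the appropriate $\mathcal{I}(\cdot)$ by construction---nonnegative combinations map to nonnegative combinations, and the whole statement reduces to the base-polynomial level.

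The only case requiring real work, and thus the main obstacle, is showing that the recursive $\mathsf{fold}$-based evaluation of $\mathsf{Cons}(\llbracket i_1\rrbracket, is)$ agrees with the subsequence-counting evaluation of the sequence index. Here I would unfold the $\mathsf{fold}$ rule of Figure~\ref{fig:indices} on a cons cell $\mathsf{Cons}(h,t)$: the direct summand contributes $\phi_{\llbracket i_1\rrbracket}(h)\cdot\phi_{is}(t)$ (the selections that use the head), while the summand over $\mathcal{M}\{\alpha.\,\sumty{\unitty}{\prodty{\llbracket A\rrbracket}{\alpha}}\}$ collapses, via the Constant indices sum lemma applied to $\constidxs{\llbracket A\rrbracket}$, to $\phi_{\mathsf{Cons}(\llbracket i_1\rrbracket, is)}(t)$ (the selections that skip the head). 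This yields exactly the local recurrence $\phi_{i::is}(v::vs) = \phi_i(v)\cdot\phi_{is}(vs) + \phi_{i::is}(vs)$ from the overview, which also characterizes the subsequence count, so an inner induction on the list closes the case. Nested lists, and the tree types that Hoffmann et al.\ treat as preorder-flattened lists, introduce no new difficulty: the translation is compositional and the value bijection factors through the same flattening.
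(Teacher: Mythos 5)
Your treatment of products and of (possibly nested) lists is correct and matches what the paper regards as the easy part: the cons-index recurrence $\phi_{i::is}(v::vs)=\phi_i(v)\cdot\phi_{is}(vs)+\phi_{i::is}(vs)$ does fall out of the $\mathsf{fold}$ rule once the $\mathcal{M}$-summand is collapsed with the constant-indices-sum lemma, so a Hoffmann list index translates to a single index of ours. The genuine gap is your last two sentences. The paper's proof identifies binary trees as \emph{the only nontrivial case}, and you dismiss exactly that case. In Hoffmann et al., a tree base polynomial is named by a list of indices $[i_1,\dots,i_k]$ and evaluated by counting increasing $k$-tuples in the \emph{preorder flattening} of the tree. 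In our system there is no flattening to factor through: tree values are not flattened, and tree indices are structural (a node index carries one sub-index per child, and $\phi$ recurses into children separately). Consequently a single Hoffmann tree index is \emph{not} sent to a single index of ours, and your claim that the translation is ``monomial-to-monomial'' fails here; this is precisely why the theorem is stated in terms of representability as annotations (linear combinations) rather than as indices.

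Concretely, on a node $\mathsf{node}(a,t_1,t_2)$ the preorder traversal is the root followed by the traversals of $t_1$ and then $t_2$, so an increasing $k$-tuple either consumes the root against $i_1$ or not, and in either case the remaining indices split as a prefix landing in $t_1$ and a suffix landing in $t_2$. Expressing this count in our basis therefore requires a further induction on $k$: for each of the $O(k)$ splittings one inductively obtains an \emph{annotation} (not a single index) for each side, and then forms node indices with those annotations placed on the two children, summing the results. This is the argument the paper sketches, and without it your proof does not cover the tree types of Hoffmann et al.\ --- which is the case that actually distinguishes the two systems. You would need to supply this splitting construction and verify, against the structural $\mathsf{fold}$ evaluation rule, that the resulting linear combination computes the preorder-subsequence count.
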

\begin{proof}[Proof sketch]
  Because resource polynomials are linear combinations of base polynomials, it suffices to show that their base polynomials are representable as our annotations. We show this by induction over types; the only nontrivial case is for binary trees. After a further induction over the length of the list of indices that serves as an index for such a binary tree, we can consider all possible splittings of the list, inductively obtain annotations for each splitting, and construct nodes with those annotations on either side.
\end{proof}
The proof is extensible to the finite arity trees as in \cite{hoffmann2017towards}.

\section{Language \& Type System}
\label{sec:typesystem}

Our language is essentially eager FPC~\cite{harper2016practical,fiore1994axiomatization} with a \(\tickex{q}\) expression to express cost and an explicit let construct. \(\mathsf{tick}\) expressions are the only sources of cost in our language, but any given cost metric based on syntactic forms can be desugared into a language with no cost other than explicit \(\mathsf{tick}\) expressions; additionally, such forms offer more flexibility for the programmer to specify particular kinds of costs.

\subsection{Semantics}
\label{sec:semantics}

The cost semantics of the language is a standard small-step operational semantics. The judgement \(\singlestep{e}{q}{e'}{q'}\) says the expression \(e\), starting with \(q \geq 0\) resources, transitions in a single step to expression \(e'\), with \(q' \geq 0\) resources remaining. The only reduction added compared to a pure call-by-value language is that for \(\tickex{q}\), as follows: \[ \inferrule{q \geq q_0}{\singlestep{\tickex{q_0}}{q}{\unitex}{q - q_0}} \] The judgement \(\multistep{e}{q}{e'}{q'}\) is then the transitive reflexive closure of the single step relation, with the constraint that only nonnegative resources may be considered. For use in cost-free derivations, we also define a ``pure'' semantics \(e \rightarrow e' \defd \exists q, q'.\, \singlestep{e}{q}{e'}{q'}\) and denote the transitive reflexive closure of that as \(e \rightarrow^* e'\).

\subsection{Type judgements}
\label{sec:typejudgements}

Type judgements in our system are as follows:

\begin{minipage}{\linewidth}
\vspace{2.5em}
\[
  \hasty{\Gamma}{\tikzmarknode{P}{\highlight{red}{$P$}}}{\tikzmarknode{c}{\highlight{OliveGreen}{$c$}}}{e}{\tau}{\tikzmarknode{Q}{\highlight{Fuchsia}{$Q$}}}
\]
\begin{tikzpicture}[overlay,remember picture,>=stealth,nodes={align=left,inner ysep=1pt},<-]
  \coordinate (Plabel) at ($(P.north)+(0,1em)$);
  \path (P.north) -- (Plabel) node[anchor=south east,color=red!67] (inannot){\textbf{input annotation on } $\Gamma$};
  \draw [color=red!87](P.north) |- ([xshift=-0.3ex,color=red]inannot.south west);

  \path (c.north) -- (c.north |- Plabel) node[anchor=south west,color=OliveGreen!67] (costmodel){\textbf{cost model}};
  \draw [color=OliveGreen!87](c.north) |- ([xshift=-0.3ex,color=OliveGreen]costmodel.south east);

  \path (Q.south) ++ (0,-1em) node[anchor=north east,color=Fuchsia!67] (remannot){\textbf{remainder annotation on } $\Gamma, \circ : \tau$};
  \draw [color=Fuchsia!57](Q.south) |- ([xshift=-0.3ex,color=Fuchsia]remannot.south west);
\end{tikzpicture}
\vspace{1.5em}
\end{minipage}

This can be informally read as 
``in the context \(\Gamma\) with \(P\) resources, 
under cost model \(c\), \(e\) has type \(\tau\) 
with \(Q\) resources left over.'' We now go into 
detail about each of these constructs that we 
have yet to explain: context annotations, 
remainder contexts, and cost models.

\subsubsection{Context annotations}
\label{sec:contextannotations}

Define \(\Idx(\Gamma) \defd \prod_{x : \tau \in \Gamma} \Idx(\tau)\) and \(\mathcal{A}(\Gamma)\) and \(\mathcal{C}(\Gamma)\) as the corresponding extensions of \(\mathcal{A}(\tau)\) and \(\mathcal{C}(\tau)\); note that \(\mathcal{A}(\Gamma) \cong \bigotimes_{x : \tau \in \Gamma} \mathcal{A}(\tau)\).
We may notationally elide coercions between isomorphic 
semimodules, such as \(\mathcal{A}(\Gamma_1, \Gamma_2) \cong \mathcal{A}(\Gamma_1) \otimes \mathcal{A}(\Gamma_2)\) 
and \(\mathcal{A}(x : \tau) \cong \mathcal{A}(\tau)\). 
Other annotation operations (all of which are linear maps) include:
\begin{itemize}
  \item The ``projection operator'' \(\proj{i}{\Gamma_1} : \mathcal{A}(\Gamma_1, \Gamma_2) \to \mathcal{A}(\Gamma_1)\), where \(i \in \mathcal{I}(\Gamma_2)\), which takes a slice of the annotation at the index \(i\). \(Q = \proj{i}{\Gamma_1}(P)\) is defined by \(q_j = p_{(j, i)}\).
  \item The ``pairing operator'' \(\mathsf{pair}^{x_1,x_2}_y : \mathcal{A}(\Gamma, x_1:\tau_1, x_2:\tau_2) \to \mathcal{A}(\Gamma, y:\prodty{\tau_1}{\tau_2})\) defined just by reassociating, because \(\mathcal{A}(\Gamma, x_1:\tau_1, x_2:\tau_2) \cong \mathcal{A}(\Gamma) \otimes \mathcal{A}(\prodty{\tau_1}{\tau_2}) \cong \mathcal{A}(\Gamma, y : \prodty{\tau_1}{\tau_2})\).
  \item An extension of the shift operator to context annotations, \(\shift_x = (\text{id}_{\mathcal{A}(\Gamma)} \otimes \shift) : \mathcal{A}(\Gamma, x : \recty{\alpha}{\tau}) \to \mathcal{A}(\Gamma, x : [\recty{\alpha}{\tau}/\alpha]\tau)\), i.e., shifting is applied to \(x\).
  \item An extension of the sharing operator to context annotations, \(\shareext{x}{y}{z} = (\text{id}_{\mathcal{A}(\Gamma)} \otimes \annotshare) : \mathcal{A}(\Gamma, x : \tau, y : \tau) \to \mathcal{A}(\Gamma, z : \tau)\), i.e., the variables \(x\) and \(y\) are shared together into \(z\)
\end{itemize}

\subsubsection{Remainder contexts}
\label{sec:remcontexts}

To more accurately track potential, we make use of annotations on \emph{remainder contexts}, which were inspired by IO-contexts from linear logic proof search~\cite{cervesato2000efficient,hodas1994logic} but introduced in the programmatic AARA setting by Kahn and Hoffmann~\cite{kahn2021automatic}. Remainder contexts contain both the typing context \(\Gamma\) with the additional pseudovariable ``\(\circ:\tau\)'' representing the result. Annotations on remainder contexts then represent the potential left on the whole context after the expression being typed has terminated.

Many of the benefits of remainder contexts noted 
in \cite{kahn2021automatic} extend to our setting. 
Such advantages include functions' ability to 
return potential back to their arguments after being 
called and the elimination of explicit sharing.

\subsubsection{Cost models}
\label{sec:costmodels}

We have two different cost models: ``cost-paid'', denoted \(\costpaid\), and ``cost-free'', denoted \(\costfree\). The former refers to an actual, cost-relevant execution (\(\mapsto^*\)), while the latter refers to a pure execution (\(\rightarrow^*\)). While we aren't concerned with pure executions directly, understanding them is necessary to transform mixed potential contexts across abstraction boundaries (in our case, function calls).

\subsection{Types}
\label{sec:types}

We explained of most of our language's types in \cref{sec:typesandvalues}, but discuss recursive types and function types in more detail here.

Support for more general recursive types are the core novel feature we add to AARA. As put forth in sections~\ref{sec:nestedpotential} and~\ref{sec:respolys}, we make substantial contributions in generalizing resource polynomials to them. We chose the word ``recursive'' to describe these types because that is indeed how they are constructed. However, we wish to stress that the distinction between recursive types and \emph{inductive} types is not very meaningful in our setting: for one, our functions have built-in general recursion, so recursive types do not grant any more expressive power there than inductive types would; furthermore, because we have only trivial (constant) resource polynomials over function values themselves, the landscape of resource polynomials is also unaffected by the distinction between recursive and inductive types.

Function types have the form \(\funty{\tau_1}{\tau_2}{\Theta}{\Theta_{\costfree}}\), where \(\Theta, \Theta_{\costfree} \subseteq \mathcal{A}(\tau_1) \times \mathcal{A}(\prodty{\tau_1}{\tau_2})\). \(\Theta\) and \(\Theta_{\costfree}\) are \emph{resource specifications}: they denote how much potential is required on a function's argument and how much potential is returned on the argument and result. A function may have many different resource specifications in order to handle resource polymorphic recursion. Intuitively, for any specification \((P, Q) \in \Theta\), to call the function on an argument \(v\), at least \(\Pot{\tau_1}{P}{v}\) resources are needed; once it returns with value \(v'\), \(\Pot{\prodty{\tau_1}{\tau_2}}{Q}{\pairex{v}{v'}}\) resources are returned as well. The argument \(v\) is mentioned in the output potential to enable the remainder contexts discussed in \cref{sec:remcontexts}. \(\Theta_{\costfree}\) serves a similar purpose for the cost-free model discussed in~\cref{sec:costmodels}: \((P, Q) \in \Theta_{\costfree}\) implies that \(\Pot{\prodty{\tau_1}{\tau_2}}{Q}{\pairex{v}{v'}} \geq \Pot{\tau_1}{P}{v}\), with no mention of resources gained or spent during execution.

\subsection{Typing rules}
\label{sec:typingrules}

\newcommand{\myinferrule}[3][]{\inferrule*[lab=#1]{#2}{#3}}

\newcommand{\subbox}[1]{#1}
\begin{figure*}
  \judgement{Resource typing}{\(\hasty{\Gamma}{P}{c}{e}{\tau}{Q}\)}
  \vspace{1em}

  \newlength{\boxwidth}
  \setlength{\boxwidth}{\textwidth-6.8pt} 


  \small
  \centering
    \begin{mathpar}
    \myinferrule[\textsc{T:TickPos}]
    {q \geq 0\!\!\!\!\! \\ P \otimes \mathcal{C}(\circ:\unitty) = Q {+} q \cdot \mathcal{C}(\Gamma, \circ:\unitty)}
    {\hasty{\Gamma}{P}{\costpaid}{\tickex{q}}{\unitty}{Q}}

    \inferrule*[lab=\textsc{T:TickNeg},leftskip=.38cm,rightskip=.38cm]
    {q \geq 0\!\!\!\!\! \\ P \otimes \mathcal{C}(\circ:\unitty) {+} q \cdot \mathcal{C}(\Gamma, \circ:\unitty) = Q}
    {\hasty{\Gamma}{P}{\costpaid}{\tickex{-q}}{\unitty}{Q}}

    \myinferrule[\textsc{T:InR}]
    { P = (\text{id}_{\mathcal{A}(\Gamma)} \otimes (\annotshare \circ (\text{id}_{\mathcal{A}(\tau_r)} \otimes \mathsf{inr}^{-1})))(Q) }
    {\hasty{\Gamma, x : \tau_r}{P}{c}{\inrex{x}}{\sumty{\tau_l}{\tau_r}}{Q}}
\\
    \myinferrule[\textsc{T:TickFree}]
    {P \otimes \mathcal{C}(\circ:\unitty) = Q}
    {\hasty{\Gamma}{P}{\costfree}{\tickex{q}}{\unitty}{Q}}

    \myinferrule[\textsc{T:Var}]
    {\share{\circ}{x}(Q) = P}
    {\hasty{\Gamma, x : \tau}{P}{c}{x}{\tau}{Q}}

    \myinferrule[\textsc{T:Unit}]
    {P \otimes \mathcal{C}(\circ:\unitty) = Q}
    {\hasty{\Gamma}{P}{c}{\unitex}{\unitty}{Q}}

    \myinferrule[\textsc{T:InL}]
    { P = (\text{id}_{\mathcal{A}(\Gamma)} \otimes (\annotshare \circ (\text{id}_{\mathcal{A}(\tau_l)} \otimes \mathsf{inl}^{-1})))(Q) }
    {\hasty{\Gamma, x : \tau_l}{P}{c}{\inlex{x}}{\sumty{\tau_l}{\tau_r}}{Q}}

    \myinferrule[\textsc{T:Let}]
    {\hasty{\Gamma}{P}{c}{e_1}{\tau'}{R} \\\\
     (\text{id}_{\mathcal{A}(\Gamma)} \otimes \text{id}_{\mathcal{A}(\tau')})(R) = (\text{id}_{\mathcal{A}(\Gamma)} \otimes \text{id}_{\mathcal{A}(\tau')})(S) \\\\
      \hasty{\Gamma, x : \tau'}{S}{c}{e_2}{\tau}{Q \otimes \mathcal{C}(x:\tau')}
    }
    {\hasty{\Gamma}{P}{c}{\letex{e_1}{x}{e_2}}{\tau}{Q}}

    \myinferrule[\textsc{T:CaseSum},narrower=0.5]
    { x' \text{ fresh} \\
      \share{x'}{x}(P') = P \\
      \share{x'}{x}(Q') = Q \\
      \forall n \in \{\mathsf{l},\mathsf{r}\}.\, \big( \\
      (\text{id}_{\mathcal{A}(\Gamma)} \otimes \mathsf{in}_{n}^{-1} \otimes \mathsf{in}_{n}^{-1})(P') = (\text{id}_{\mathcal{A}(\Gamma)} \otimes \mathsf{in}_{n}^{-1} \otimes \text{id}_{\mathcal{A}(\tau_n)})(R_n) \\
      \hasty{\Gamma, x : \sumty{\tau_l}{\tau_r}, x_n : \tau_n}{R_n}{c}{e_n}{\tau}{S_n} \\
      (\text{id}_{\mathcal{A}(\Gamma)} \otimes \mathsf{in}_{n}^{-1} \otimes \text{id}_{\mathcal{A}(\tau_n)} \otimes \text{id}_{\mathcal{A}(\tau)})(S_n) = (\text{id}_{\mathcal{A}(\Gamma)} \otimes \mathsf{in}_{n}^{-1} \otimes \mathsf{in}_{n}^{-1} \otimes \text{id}_{\mathcal{A}(\tau)})(R_n) \big)
    }
    {\hasty{\Gamma, x : \sumty{\tau_l}{\tau_r}}{P}{c}{\casesex{x}{x_l}{e_l}{x_r}{e_r}}{\tau}{Q}}

    \myinferrule[\textsc{T:Prod}]
    { 
      P = (\text{id}_{\mathcal{A}(\Gamma)} \otimes (\mathsf{pair}^{-1} \circ \annotshare \circ (\mathsf{pair} \otimes \text{id}_{\mathcal{A}(\circ : \prodty{\tau_1}{\tau_2})})))(Q)
    }
    {\hasty{\Gamma, x_1 : \tau_1, x_2 : \tau_2}{P}{c}{\pairex{x_1}{x_2}}{\prodty{\tau_1}{\tau_2}}{Q}}
    
    \myinferrule[\textsc{T:CaseProd},narrower=0.5]
    { P = (\text{id}_{\mathcal{A}(\Gamma)} \otimes (\annotshare \circ (\text{id}_{\mathcal{A}(\prodty{\tau_1}{\tau_2})} \otimes \mathsf{pair})))(R) \\
      \hasty{\Gamma, x : \prodty{\tau_1}{\tau_2}, x_1 : \tau_1, x_2 : \tau_2}{R}{c}{e}{\tau}{S} \\
      Q = (\text{id}_{\mathcal{A}(\Gamma)} \otimes (\annotshare \circ (\text{id}_{\mathcal{A}(\prodty{\tau_1}{\tau_2})} \otimes \mathsf{pair})) \otimes \text{id}_{\mathcal{A}(\tau)})(S)
    }
    {\hasty{\Gamma, x : \prodty{\tau_1}{\tau_2}}{P}{c}{\casepex{x}{x_1}{x_2}{e}}{\tau}{Q}}

    \myinferrule[\textsc{T:Fold}]
    {P = \share{\circ}{x}(\shift_{\circ}(Q))}
    {\hasty{\Gamma, x : [\recty{\alpha}{\tau} / \alpha]\tau}{P}{c}{\foldex{x}}{\recty{\alpha}{\tau}}{Q}}

    \myinferrule[\textsc{T:Unfold}]
    { P = \share{\circ}{x}(\shift^{-1}_{\circ}(Q)) }
    {\hasty{\Gamma, x : \recty{\alpha}{\tau}}{P}{c}{\unfoldex{x}}{[\recty{\alpha}{\tau} / \alpha]\tau}{Q}}
\\
    \myinferrule[\textsc{T:Fun}]
    {\forall (R, S) \in \Theta_{\costpaid}.\;
      \hasty{\Gamma, f : \funty{\tau_1}{\tau_2}{\Theta_{\costpaid}}{\Theta_{\costfree}}, x : \tau_1}
            {\constidxs{\Gamma} \otimes R}
            {\costpaid}
            {e}
            {\tau_2}
            {\constidxs{\Gamma} \otimes S} \\
     \forall (R, S) \in \Theta_{\costfree}.\;
      \hasty{\Gamma, f : \funty{\tau_1}{\tau_2}{\Theta_{\costpaid}}{\Theta_{\costfree}}, x : \tau_1} 
            {\constidxs{\Gamma} \otimes R}
            {\costfree}
            {e}
            {\tau_2}
            {\constidxs{\Gamma} \otimes S} \\
    }
    {\hasty{\Gamma}{P}{c}{\funex{f}{x}{e}}{\funty{\tau_1}{\tau_2}{\Theta_{\costpaid}}{\Theta_{\costfree}}}{Q}}

    \myinferrule[\textsc{T:Weaken}]
    {\hasty{\Gamma}{R}{c}{e}{\tau}{S} \\\\
     P = R \otimes \mathcal{C}(x:\sigma) \\
     Q = S \otimes \mathcal{C}(x:\sigma)
    }
    {\hasty{\Gamma, x:\sigma}{P}{c}{e}{\tau}{Q}}

    \myinferrule[\textsc{T:App}]
    { \forall i \in \constidxs{\Gamma}.\, (\proj{\Gamma \mapsto i}{\tau_1}(P), \proj{\Gamma \mapsto i}{\prodty{\tau_1}{\tau_2}}(Q)) \in \Theta_{\costpaid} \\\\
      \forall i \notin \constidxs{\Gamma}.\, (\proj{\Gamma \mapsto i}{\tau_1}(P), \proj{\Gamma \mapsto i}{\prodty{\tau_1}{\tau_2}}(Q)) \in \Theta_{\costfree}
    }
    {\hasty{\Gamma, f : \langle \tau_1 \to \tau_2, \Theta_{\costpaid}, \Theta_{\costfree} \rangle, x : \tau_1}{P}{c}{\appex{f}{x}}{\tau_2}{Q}}

    \myinferrule[\textsc{T:Relax}]
    {\hasty{\Gamma'}{R}{c}{e}{\tau'}{S} \\\\
      R \leq P \\
      Q \leq S
    }
    {\hasty{\Gamma}{P}{c}{e}{\tau}{Q}}

    \myinferrule[\textsc{T:Augment}]
    {\hasty{\Gamma}{R}{c}{e}{\tau}{S} \\\\
      P = R + T \\
      Q = S + T \otimes \mathcal{C}(\tau)
    }
    {\hasty{\Gamma}{P}{c}{e}{\tau}{Q}}
    \end{mathpar}

  \caption{Resource typing inference rules. See~\cref{sec:typingrules} for explanations of selected rules.}
  \label{fig:typingrules}
\end{figure*}

The typing rules for our language are available in Figure~\ref{fig:typingrules}. All rules are syntax-directed, except for the final three, which are structural. All rules apply to programs in \emph{let-normal form} to allow more precise accounting of potential. Of course, a preprocessing step may be added before typing to hide this restriction from the viewpoint of the user.

We now overview a representative subset of the \emph{interesting} typing rules given in Figure~\ref{fig:typingrules}.

\ActivateWarningFilters[scitundefined]
\subsubsection*{\textsc{T:Let}}
This rule is primarily interesting because of how simple it is compared to past work. In previous multivariate AARA works, this rule had to consider cost-free derivations of the term being bound, in order to handle mixed potential between the result and other variables in the context~\cite{hoffmann2011multivariate}. However, our remainder contexts remove this requirement because the remainder annotation may simply mention these mixed potentials. This improvement does not come for free; instead, cost-free derivations are needed for functions, but we believe that that is a more appropriate abstraction boundary.

\subsubsection*{\textsc{T:Pair}}
Some rules use tensor constructions that look complex but are really just victims of cumbersome bookkeeping; this rule is one of them. In words, the condition says that \(Q\) is just \(P\), but with \(x_1\) and \(x_2\) shared and then packed into \(\circ\).

\subsubsection*{\textsc{T:Fold} and \textsc{T:Unfold}}
These rules are the core of the potential annotation-based type system, taking advantage of the definition of the additive shift operator in~\cref{sec:shifting}.

\subsubsection*{\textsc{T:App}}
This function application rule essentially requires enough potential on the argument to use a cost-paid annotation of the function that is then returned into the remainder annotation, and then all of the mixed potential terms must be transformed according to cost-free annotations of the function.
\DeactivateWarningFilters[scitundefined]

\subsection{Typing example}
We now give an example of a novel typing that can be given with this type system. We present the code in a surface language with some light sugar that gives names to constructors of recursive types. Here the cost model is again the number of cons cells created.

Imagine implementing a mock filesystem. Each node is either a file, consisting of its name and its unstructured contents, or a directory, consisting of its name and a list of nodes contained within:
\begin{lstlisting}
type filesystem = File of string * string
                | Dir of string * filesystem list
\end{lstlisting}

The function \(\mathsf{attach}\), given a name 
and a filesystem, returns a list of pairs where the 
first element is the given name and the second 
is the name of a directory or file in 
the filesystem.
\begin{lstlisting}
let rec attach dname (acc, fs) = match fs with
  | File (fname, _) -> [(dname, fname)]
  | Dir (subdname, fss) ->
    (dname, subdname) :: foldl (attach dname) (acc, fss)
\end{lstlisting}
\newcommand{\stringtype}{\mathbb{S}}
(Here \(\mathsf{foldl} : (\mathsf{'b} \times \mathsf{'a} \to \mathsf{'b}) \to \mathsf{'b} \times \mathsf{list}(\mathsf{'a}) \to \mathsf{'b}\) is the left fold function on lists.) Without considering resource usage, this function can be given the type \(\mathsf{attach} : \stringtype \to \mathsf{list} (\stringtype \times \stringtype) \times \mathsf{filesystem} \to \mathsf{list} (\stringtype \times \stringtype)\) where \(\stringtype\) is an abbreviation for the string type. This function creates cons cells equal to the number of nodes in the filesystem. Accordingly, we can give \(\mathsf{attach}\) a typing that requires no potential on its first argument, \[1 \cdot ([], \mathsf{File}(\star, \star)) + 1 \cdot ([], \mathsf{Dir}(\star, [\lastidx]))\] potential on its second argument (where \(\star\) gives the constant potential on strings), and returns no potential on its output.

Building on \(\mathsf{attach}\), the second example function \(\mathsf{trans}\), 
given a representation of a filesystem, computes a list of every pair 
\((d, s)\) such that \(d\) is the name of an ancestor directory of 
the file or subdirectory named \(s\).\footnote{This is adapted from an example given by Hoffmann~\cite{hoffmannTypesPotentialPolynomial2011}.}
\begin{lstlisting}
let rec trans (acc, fs) = match fs with
  | File (_, _) -> []
  | Dir (dname, fss) ->
    foldl trans ((foldl (attach dname) (acc, fss)), fss)
\end{lstlisting}
Without considering resource usage, this function can be given the type \(\mathsf{trans} : \mathsf{list} (\stringtype \times \stringtype) \times \mathsf{filesystem} \to \mathsf{list} (\stringtype \times \stringtype)\). This creates cons cells quadratic in the number of nodes in the filesystem. Using our type system, we can say that \(\mathsf{trans}\) requires a potential of \[1 \cdot ([], \mathsf{Dir}(\star, [\mathsf{File}(\star, \star)])) + 1 \cdot ([], \mathsf{Dir}(\star, [\mathsf{Dir}(\star, [\lastidx])]))\] on its argument and returns no potential on its output. The twice-nested occurrences of the \(\mathsf{Dir}\) constructor in the argument's annotation indicates the quadratic potential usage.

\subsection{Soundness}
\label{sec:soundness}

Soundness here means that a well-typed term in a closed context is safe to execute
if starting with enough resources:
\begin{theorem}[Soundness]\label{thm:soundness}
  Assume \(\hasty{\cdot}{p}{\costpaid}{e}{\tau}{Q}\). For any \(r \geq 0\) and execution \(\multistep{e}{p + r}{e'}{q}\), we have either:
  \begin{itemize}
    \item \(e' \in \mathcal{V}(\tau)\) and \(q \geq r + \Pot{\tau}{Q}{e'}\), or
    \item there exists some \(e''\) and \(q'\) such that \(\singlestep{e'}{q}{e''}{q'}\).
  \end{itemize}
\end{theorem}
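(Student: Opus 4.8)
The plan is to prove a stronger open-term statement by a logical relation and recover \cref{thm:soundness} as the special case $\Gamma = \cdot$. Because the language has general recursion—both through $\funex{f}{x}{e}$ and through $\foldex{}$ at recursive types—a value relation defined by induction on $\tau$ would be ill-founded at the arrow case, so I would step-index it. Concretely, I would define a step-indexed value interpretation $\mathcal{V}_k\llbracket\tau\rrbracket \subseteq \mathcal{V}(\tau)$ following the structure of the value-typing rules of \cref{fig:typesandvalues}: unit, pairs, and injections interpret structurally, a fold $\foldex{v}$ inhabits $\recty{\alpha}{\tau}$ at budget $k$ when $v$ inhabits $[\recty{\alpha}{\tau}/\alpha]\tau$ at budget $k-1$ (so the index drops at each $\unfoldex{}$), and a closure $\funex{f}{x}{e}$ inhabits $\funty{\tau_1}{\tau_2}{\Theta_{\costpaid}}{\Theta_{\costfree}}$ at budget $k$ exactly when, for every $(R,S)\in\Theta_{\costpaid}$ and every argument value good at budget $k$, the application runs safely and respects the resource accounting promised by $(R,S)$ for all smaller budgets, with the analogous clause for $\Theta_{\costfree}$ under the pure semantics $\rightarrow^*$. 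The matching expression interpretation is essentially the body of the theorem: an expression $e$ with scalar input potential $n$ and remainder annotation $Q$ on $\circ:\tau$ is \emph{good} if, for every $r \geq 0$ and every run $\multistep{e}{n+r}{e'}{q}$, either $e'$ is a value in $\mathcal{V}\llbracket\tau\rrbracket$ with $q \geq r + \Pot{\tau}{Q}{e'}$, or $\singlestep{e'}{q}{e''}{q'}$ for some $e'', q'$.

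Next I would lift this to contexts. A closing substitution $\gamma$ sends each $x:\sigma \in \Gamma$ to a value in $\mathcal{V}\llbracket\sigma\rrbracket$, and I would define semantic typing $\Gamma; P \vDash_c e : \tau; Q$ to mean that for every such $\gamma$ the closed expression $\gamma(e)$ is good with input potential $\Pot{\Gamma}{P}{\gamma}$ and remainder $Q$ interpreted on $\gamma$ extended by the result value. The \emph{fundamental theorem}—that $\hasty{\Gamma}{P}{c}{e}{\tau}{Q}$ implies $\Gamma; P \vDash_c e : \tau; Q$—is then proved by a single induction on the typing derivation, exactly as advertised. Each typing rule reduces to a local fact about the expression relation, and the free-semimodule presentation makes most of these pure bookkeeping that bottoms out in the three potential-preservation results already established: \textsc{T:Fold} and \textsc{T:Unfold} invoke the \emph{Shift preserves potential} theorem; \textsc{T:Var}, \textsc{T:Prod}, \textsc{T:CaseProd}, \textsc{T:InL}/\textsc{T:InR}, and \textsc{T:CaseSum} invoke \emph{Share preserves potential} together with the $\mathsf{pair}$ and $\mathsf{inl}^{-1}$ isomorphisms; \textsc{T:Unit}, \textsc{T:TickFree}, and \textsc{T:Weaken}/\textsc{T:Augment} use the \emph{Constant indices sum} lemma and the noted monotonicity and additivity of $\Phi$ in its annotation; \textsc{T:TickPos}/\textsc{T:TickNeg} directly adjust the scalar resource count; and \textsc{T:Let} threads potential through $e_1$ then $e_2$, where the remainder-context equality of the shared slices discharges mixed potential \emph{without} a cost-free subderivation, which is the simplification this presentation buys.

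The hard part will be the arrow type and \textsc{T:App}, where the cost-free model does real work. Verifying \textsc{T:App} semantically requires splitting the argument potential into a ``pure'' part, indexed by $\mathcal{C}(\Gamma)$, which is covered by a cost-paid specification, and the mixed terms, indexed by $i \notin \mathcal{C}(\Gamma)$, which entangle the argument with the surrounding context and must be transported across the call by a cost-free specification whose defining inequality guarantees the transformation is potential-nonincreasing; the delicate accounting is showing these contributions recombine additively into the promised remainder annotation $Q$ on the whole context. This is precisely what forces the value interpretation at arrow types to bundle both the $\Theta_{\costpaid}$ and $\Theta_{\costfree}$ clauses, and, because the function form is recursive, to be stated at a step budget: the \textsc{T:Fun} case is discharged by an inner induction on $k$ in which the recursive occurrence $f$ may be assumed good at all strictly smaller budgets. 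I expect establishing well-foundedness of the arrow interpretation together with the correctness of cost-free composition to be the principal technical obstacle; everything else is routine given the potential-preservation lemmas. Finally I would instantiate the fundamental theorem at $\Gamma = \cdot$: the unique closing substitution is empty, $P = p \in \Qnn$ is a scalar whose potential is just $p$, and $Q \in \mathcal{A}(\cdot, \circ:\tau) \cong \mathcal{A}(\tau)$, so $\cdot; p \vDash_{\costpaid} e : \tau; Q$ unfolds to exactly the disjunction in the statement of \cref{thm:soundness}.
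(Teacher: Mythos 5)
Your proposal is correct and follows essentially the same route as the paper: a step-indexed value relation whose arrow case bundles both the $\Theta_{\costpaid}$ and $\Theta_{\costfree}$ specifications, a fundamental theorem proved by a single induction on typing derivations that bottoms out in the shift/share/constant-index lemmas, an additive recombination of cost-paid and cost-free contributions at \textsc{T:App} (the paper isolates this as its ``weakest precondition glomming'' lemma), and adequacy at the empty context. The only difference is presentational: the paper phrases the expression side as a step-indexed weakest-precondition predicate transformer and recovers your multistep disjunction via a separate adequacy lemma, whereas you inline that disjunction directly into the expression relation.
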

\begin{proof}
  We construct a step-indexed logical relation; the only non-standard case is that for the arrow type, in which, starting at the next step, all of the type's annotations must be satisfied in all states with sufficiently many resources. The full definition is available in Appendix~\ref{app:soundness}. We then prove the fundamental theorem of logical relations by induction on the typing derivation and show that inhabitation in the logical relation implies the conclusion of this theorem.
\end{proof}

\section{Related Work}
\label{sec:relatedwork}

Our work builds upon the literature of the 
Automatic Amortized Resource Analysis (AARA) type 
system. AARA was first introduced in \cite{hofmann2003static},
using potential method~\cite{tarjan1985amortized} reasoning for
the automatic derivation of heap-space 
bounds \textit{linear} in the sizes of data structures.
AARA has been extended to support bounds in the forms
of polynomials \cite{hoffmann2010amortized}, exponentials
\cite{kahn2020exponential}, logarithms \cite{hofmann2021type},
and maxima \cite{kahn2021automatic, campbell2009amortised}. 
However, each of these works bakes their size parameters
and resource functions into the inductive data types of trees and lists 
(or labeled trees \cite{hoffmann2017towards}).
The one exception is the Schopenhauer language \cite{jost2010static}, which included support for deriving bounds on programs using nested recursive types, but only in the very restricted class of linear functions.
Our indices provide the first method of automatically constructing
resources functions and size parameters for general recursive types.
Our resulting system conservatively extends the bounds given by
the \textit{multivariate} polynomial system \cite{hoffmann2011multivariate,hoffmann2017towards},
wherein bounds may depend on the products of the sizes of 
data structures.  
And, while our work does not go in these directions, AARA's analysis can also cover other models of 
computation and analysis,
including imperative \cite{carbonneaux2015compositional}, 
object-oriented \cite{hofmann2006type, hofmann2013automatic}, 
probabilistic \cite{ngo2018bounded, wang2020raising}, and
parallel computation \cite{hoffmann2015automatic}, 
digital contract protocols \cite{das2021resource},
and lower bound costs \cite{dehesa2017verifying}.
Combinatorial 
species~\cite{bergeronCombinatorialSpeciesTreelike1997} 
may also relate to AARA's resource polynomials.

Aside from AARA, other type-based systems have also been 
used to analyze the resource usage of programs. These include 
linear dependent types \cite{dal2013geometry, dal2011linear},
refinement types \cite{radivcek2017monadic, wang2017timl},
modal types \cite{rajani2021unifying},
size types \cite{vasconcelos2008space, serrano2014resource},
annotation-based systems \cite{crary2000resource, danielsson2008lightweight},
and more. These type systems bookkeep costs 
using a variety of differing ideas, but they all 
enjoy the high composability provided by type systems, usually 
employ some linear features and cost constraints like AARA. 
Unlike AARA, however, many trade some degree of 
automatabilitiy for richer features -- at the extreme 
other end from AARA one finds type-based proof logics like in \cite{niu2022cost}, which require 
significant user work to prove cost bounds.

The term-rewriting space provides some work that is 
comparable with ours. Specifically, the system from \cite{hofmann2015multivariate} 
generalizes multivariate potential functions over aribtrary 
types using tree automata. We speculate that this system
is general enough to contain the resource functions 
generated by our approach.
However, their work leaves the open questions of how to pick the appropriate automata, and 
how to solve the constraints they induce. There has been 
much work \cite{bauer2019decidability} to even solve simpler cases 
than the multivariate case.
Our work could potentially be seen as a step toward this automation.
Other resource analysis work using term-rewriting 
include \cite{avanzini2016combination, avanzini2015analysing, hirokawa2014automated, naaf2017complexity}.

Recurrence relations are another common approach to 
resource analysis, especially in a functional setting 
\cite{kincaid2017compositional, danner2015denotational, kavvos2019recurrence}. 
Some recent work uses potential-based reasoning for amortization 
\cite{cutler2020denotational}. Usually these methods operate 
by extracting recurrences from the code and then solving them.
While this can be more difficult than solving the linear 
constraints extracted by AARA, it can allow the
expression of bounding functions that AARA cannot yet support. 

Techniques from imperative \cite{gulwani2009speed, gulwani2009speed2} 
and logic \cite{navas2007user} program analyses also can  
reason about cost in terms of general notions of data structure size.
However, that work does so with manually-defined notions of size 
which are reduced to numerical analysis. While such numerical analysis 
is common in cost analysis, it does not focus on the
the sorts of intrinsic features of data structures that our work does.

Other approaches to cost analysis include
abstract interpretation \cite{albert2007costa, albert2015non, lopez2018interval},
loop analyses \cite{blanc2010abc, kincaid2017non},
relational cost analysis \cite{cciccek2017relational, qu2021relational},
ranking functions \cite{chatterjee2019non}, 
and program logics \cite{gueneau2018fistful, atkey2010amortised, mevel2019time}. 
The field varies broadly and mixes many approaches. In particular,
the cited program logics make use of potential-based reasoning 
like AARA.




\section{Conclusion and Future Work}
\label{sec:conclusion}

This work's contributions include the extension of multivariate 
resource polynomials to regular recursive datatypes and the 
introduction of semimodules as a helpful formalism
in the system's specification. The extended resource 
polynomials enable the resource analysis of 
programs using complex, nested data structures like rose trees.
These comprise a major step forward in automatable resource 
analysis through the structural combinatorics of 
data types. 
Future work will include the implementation of 
these resource functions 
in a fully automated resource analysis typechecker,
like RaML \cite{hoffmann2017towards}.
Further, we would 
like to extend our methods 
for generating resouce polynomials
to also cover resource 
exponentials \cite{kahn2020exponential},
which currently are not supported 
at the multivariate level.


\section*{Acknowledgments}
This article is based on research supported by
the Algorand Centres of
Excellence programme managed by the Algorand Foundation
and by the National Science Foundation under awards 1801369, 1845514,
and 2007784.
Any opinions, findings, and conclusions contained in this document are
those of the authors and do not necessarily reflect the views of the
sponsoring organizations.

\bibliographystyle{IEEEtran}
\bibliography{IEEEabrv,sources}

\appendices

\clearpage

\section{Semimodules Overview}
\label{app:semimodules}

This section gives a brief overview of \emph{semimodules} and the operations on and properties of them that are relevant in this work.

An \(R\)-semimodule \(M\) is a generalization of a vector space that works over a semiring \(R\) instead of a field. In this work, we will assume that \(R\) is commutative, in order to avoid repeating ``commutative'' everywhere. Just as in vector spaces, there is an addition operation \(+ : M \times M \to M\) that forms a commutative monoid and a scalar multiplication operation \(\cdot : R \times M \to M\) that respects the various distributive and identity laws. The simplest example of an \(R\)-semimodule is \(R\) itself, with semimodule operations inherited from the ring structure.

A linear map (known also as a homomorphism) between \(R\)-semimodules \(M\) and \(M'\) is a function from \(M\) to \(M'\) that respects addition and scalar multiplication. An isomorphism between semimodules is a bijective linear map between them. A bilinear map \(M \times M' \to M''\) is a function that is linear in both arguments.

A free \(R\)-semimodule is one that is finitely generated by a basis \(E\) over the semimodule operations. Equivalently, the free \(R\)-semimodule generated by \(E\) is the set of finite maps from \(E\) to nonzero elements of \(R\), with addition and scalar multiplication defined pointwise. (Unlike vector spaces, not every semimodule is isomorphic to a free semimodule, but that is not relevant for this paper.)

If \(f, g : M \to M'\) are linear maps where \(M\) is the free \(R\)-semimodule generated by \(E\), then \(f = g\) iff \(f(e) = g(e)\) for all \(e \in E\), as can be derived using the properties of linear maps by inducting over an \(M\)-element's finite generation.

The tensor product of two \(R\)-semimodules \(M_1\) and \(M_2\) is the semimodule \(M_1 \otimes M_2\) with canonical linear map \(\mathbin{\otimes} : M_1 \times M_2 \mapsto m_1 \otimes m_2\) satisfying the universal property that all bilinear maps \(f : M_1 \times M_2 \to M'\) have a unique linear map \(\tilde{f} : M_1 \otimes M_2 \to M'\) such that \(f = \tilde{f} \circ \otimes\). (Its existence can be shown with a quotient construction.) The tensor product has symmetric monoidal structure, so \(M_1 \otimes M_2 \cong M_2 \otimes M_1\) and \(M_1 \otimes (M_2 \otimes M_3) \cong (M_1 \otimes M_2) \otimes M_3\). A tensoring operation \(f \otimes g\) on functions \(f : M_1 \to M_1'\) and \(g : M_2 \to M_2'\) can be defined such that \((f \otimes g)(m_1 \otimes m_2) = f(m_1) \otimes g(m_2)\) for any \(m_1 \in M_1\) and \(m_2 \in M_2\). By convention, for a function \(f : M_1 \to M_1'\), the notation \(f \otimes M_2\) denotes the map \(f \otimes g : M_1 \otimes M_2 \to M_1' \otimes M_2\) where \(g\) is the identity map on \(M_2\).


\section{Sharing Operator}
\label{app:sharing}

\newcommand{\m}[3]{\mkInd{#1}{#2}{#3}}
\newcommand{\n}[5]{\mathcal{N}\{{#1}[{#2}].{#3}\}({#4},{#5})}
\newcommand{\mergeop}[3]{#2 \curlyvee_{#1} #3}

\begin{figure*}

\begin{align*}
    \n \alpha \tau \sigma {i_1} {i_2} &\in \mathcal{A}([\recty{\alpha}{\tau}/\alpha] \sigma) \\[0.5em]
    \n \alpha \tau \unitty i j &= 0 \\[0.5em]
    \n \alpha \tau \topty i j &= 0 \\[0.5em]
    \n \alpha \tau {\sumty \sigma \rho} i j &= 
    \inlex {(\n \alpha \tau \sigma i j)} +
    \inrex {(\n \alpha \tau \rho i j)} \\[0.5em]
    \n \alpha \tau {\prodty \sigma \rho} i j &=
    \pairex {\n \alpha \tau \sigma i j} {\constidxs \rho} + \pairex {\constidxs \sigma} {\n \alpha \tau \rho i j} \\
    &+ \pairex {\m \alpha \sigma i} {\m \alpha \rho j} + \pairex {\m \alpha \sigma j} {\m \alpha \rho i}  \\[0.5em]
    \n \alpha \tau {\recty \beta \sigma} i j &= 
    \foldex {(\n \alpha \tau {[\topty/\beta]\sigma} i j)} \\
    &+ \foldex {(\mergeop {[\recty \tau \alpha/\alpha][\recty \beta \sigma/\beta]\sigma} {\m \beta {[\recty \alpha \tau/\alpha]\sigma} {\m\alpha {[\topty/\beta]\sigma} i}} {\m\alpha {[\topty/\beta]\sigma} j})} \\
    &+ \foldex {(\mergeop {[\recty \tau \alpha/\alpha][\recty \beta \sigma/\beta]\sigma} {\m\alpha {[\topty/\beta]\sigma} i} {\m \beta {[\recty \alpha \tau/\alpha]\sigma} {\m\alpha {[\topty/\beta]\sigma} j}})} \\
    &+ \foldex {(\n \beta \sigma {[\recty \alpha \tau/\alpha]\sigma} {\m\alpha {[\topty/\beta]\sigma} i} {\m\alpha {[\topty/\beta]\sigma} j})}
\end{align*}

\begin{align*}
    \mergeop \tau {i_1} {i_2} &\in \mathcal{A}(\tau) \\[0.5em]
    \mergeop \unitty \unitex \unitex &= \unitex \\[0.5em]
    \mergeop {\sumty \tau \sigma}  {(\inlex i)} {(\inlex j)} &= \inlex {\mergeop \tau i j } \\[0.5em]
    \mergeop {\sumty \tau \sigma} {(\inrex i)} {(\inrex j)} &= \inrex {\mergeop \sigma i j} \\[0.5em]
    \mergeop {\prodty \tau \sigma} {\pairex {i_1} {i_2}} {\pairex {j_1} {j_2}} 
    &=\pairex {\mergeop \tau {i_1} {j_1}} {\mergeop \sigma {i_2} {j_2}} \\[0.5em]
    \mergeop {\recty \alpha \tau} \lastidx \lastidx &= \lastidx \\[0.5em]
    \mergeop \topty \lastidx \lastidx &= \lastidx \\[0.5em]
    \mergeop {\recty \alpha \tau} {(\foldex i)} \lastidx &= \foldex i \\[0.5em]
    \mergeop {\recty \alpha \tau} \lastidx {(\foldex i)} &= \foldex i \\[0.5em]
    \mergeop {\recty \alpha \tau} {(\foldex i)} {(\foldex j)} &=
    \foldex {(\mergeop {[\recty \alpha \tau / \alpha] \tau} i j)} \\
    &+ \foldex {(\mergeop {[\recty \alpha \tau / \alpha] \tau} {\m \alpha \tau i} j)} \\
    &+ \foldex {(\mergeop {[\recty \alpha \tau / \alpha] \tau} i {\m \alpha \tau j})} \\
    &+ \foldex {(\n \alpha \tau \tau i j)} \\[0.5em]
    \mergeop \tau i j &= 0 \hspace{3em} \mathrm{otherwise}
\end{align*}

  \caption{Definition of sharing operator}
  \label{fig:sharing}
\end{figure*}

See Figure~\ref{fig:sharing}.

\section{Soundness Details}
\label{app:soundness}

\newcommand{\interp}[1]{\llbracket #1 \rrbracket}
\newcommand{\vallr}[1]{\mathcal{V} \interp{#1}_n}
\newcommand{\vallrm}[1]{\mathcal{V} \interp{#1}_m}
\newcommand{\vallrmp}[1]{\mathcal{V} \interp{#1}_{m'}}
\newcommand{\vallrp}[1]{\mathcal{V'} \interp{#1}_{n+1}}
\newcommand{\vallrnp}[1]{\mathcal{V'} \interp{#1}_n}
\newcommand{\exlr}[2]{\mathcal{E}_c \interp{\langle #1, #2 \rangle}}
\newcommand{\exlrcf}[2]{\mathcal{E}_{\costfree} \interp{\langle #1, #2 \rangle}}
\newcommand{\wpr}[2]{\mathcal{WP}_{#1} [#2]}

\begin{figure*}
  \judgement{Weakest Precondition Relation}{\(\wpr{c}{\Psi}_n(e, q)\)}
  \begin{align*}
    \wpr{\costpaid}{\Psi}_0(e, q) &\defd \top \\[0.5em]
    \wpr{\costpaid}{\Psi}_{n+1}(e, q)
    &\defd (e \text{ val} \land \Psi(n+1, e, q)) \lor \strut \\
    &\hspace{1.3em} ((\exists e', q'.\; \singlestep{e}{q}{e'}{q'}) \land \forall e', q'.\; \singlestep{e}{q}{e'}{q'} \Rightarrow \wpr{\costpaid}{\Psi}_n(e', q')) \\[0.5em]
    \wpr{\costfree}{\Psi}_{n+1}(e, q)
    &\defd (e \text{ val} \land \Psi(n+1, e, q)) \lor \strut \\
    &\hspace{1.3em} ((\exists e'.\; e \rightarrow e') \land \forall e'.\; e \rightarrow e' \Rightarrow \wpr{\costfree}{\Psi}_n(e', q))
  \end{align*}

  \judgement{Value Logical Relation}{\(\vallr{\tau}(v)\)}
  \begin{align*}
    \mathcal{V'} \interp{\tau}_0(v) &\defd \top \\[1em]
    \vallrp{\unitty}(v) &\defd v = \unitex \\
    \vallrp{\prodty{\tau_1}{\tau_2}}(v) &\defd \exists v_1, v_2.\; v = \pairex{v_1}{v_2} \land \vallrp{\tau_1}(v_1) \land \vallrp{\tau_2}(v_2) \\
    \vallrp{\sumty{\tau_1}{\tau_2}}(v) &\defd (\exists v_1.\; v = \inlex{v_1} \land \vallrp{\tau_1}(v_1)) \lor (\exists v_2.\; v = \inrex{v_2} \land \vallrp{\tau_2}(v_2)) \\
    \vallrp{\funty{\tau_1}{\tau_2}{\Theta_{\costpaid}}{\Theta_{\costfree}}}(v) &\defd \exists f, x, e.\; v = \funex{f}{x}{e} \land \forall m \leq n. \\
    &\hspace{2em} (\forall (P, Q) \in \Theta_{\costpaid}.\; \forall q \geq 0, v'.\; \vallrm{\tau_1}(v') \Rightarrow \\
    &\hspace{4em} \wpr{\costpaid}{\lambda (m', v'', q').\; \vallrmp{\tau_2}(v'') \land\strut \\
    &\hspace{8em} q' \geq q + \Pot{\prodty{\tau_1}{\tau_2}}{Q}{\pairex{v'}{v''}}}_m \\
    &\hspace{6.4em} ([v'/x, v/f]e, q + \Pot{\tau_1}{P}{v'}) \land\strut \\
    &\hspace{2em} (\forall (P, Q) \in \Theta_{\costfree}.\; \forall q \geq 0, v'.\; \vallrm{\tau_1}(v') \Rightarrow \\
    &\hspace{4em} \wpr{\costfree}{\lambda (m', v'', q').\; \vallrmp{\tau_2}(v'') \land\strut \\
    &\hspace{8em} q' \geq q + \Pot{\prodty{\tau_1}{\tau_2}}{Q}{\pairex{v'}{v''}}}_m \\
    &\hspace{6.4em} ([v'/x, v/f]e, q + \Pot{\tau_1}{P}{v'}) \\
    \vallrp{\recty{\alpha}{\tau}}(v) &\defd \exists v'.\; v = \foldex{v'} \land \vallrnp{[\recty{\alpha}{\tau}/\alpha]\tau}(v') \\[0.5em]
    \mathcal{V} \interp{\tau}_n (v) &\defd v \in \mathcal{V}(\tau) \land \mathcal{V'} \interp{\tau}_n (v)
  \end{align*}

  \caption{Weakest precondition relation and logical relation on values.}
  \label{fig:logrels}
\end{figure*}

To state our notion of semantic well-typedness, in Figure~\ref{fig:logrels} we first define a weakest precondition unary relation \(\mathcal{WP}\) on expressions and a type-indexed unary logical relation \(\mathcal{V}\) on values. Both are indexed by a cost model \(c\), as explained in \cref{sec:costmodels}.

Our weakest precondition relation \(\mathcal{WP}\) is parameterized by a postcondition \(\Psi\) and takes as arguments an expression \(e\) and initial resources \(q\). Intuitively, \(\wpr{c}{\Psi}_n(e, q)\) holds if \(e\) ``can't go wrong'' in fewer than \(n\) steps starting with \(q\) resources and \(\Psi(m, v, q')\) holds on any resulting value \(v\), remaining steps \(m\), and remaining resources \(q'\).

The core theorem for relating cost-paid and cost-free weakest preconditions, as needed by the application rule, is that we can combine both into one cost-paid weakest precondition; formally:
\begin{lemma}[Weakest precondition glomming]\label{lem:wpcombine} 
  If \(\wpr{\costpaid}{\Psi_p}(e, q_p)\) and \(\wpr{\costfree}{\Psi_f}(e, q_f)\) both hold, then so does \(\wpr{\costpaid}{\lambda (v, q').\; \exists q_p', q_f'.\, q' = q_p' + q_f' \land \Psi_p(v, q_p') \land \Psi_f(v, q_f')}(e, q_p + q_f)\).
\end{lemma}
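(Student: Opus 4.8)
The plan is to prove the indexed strengthening ``for all \(n\), if \(\wpr{\costpaid}{\Psi_p}_n(e, q_p)\) and \(\wpr{\costfree}{\Psi_f}_n(e, q_f)\), then \(\wpr{\costpaid}{\Phi}_n(e, q_p + q_f)\),'' where \(\Phi \defd \lambda(v, q').\; \exists q_p', q_f'.\, q' = q_p' + q_f' \land \Psi_p(v, q_p') \land \Psi_f(v, q_f')\), and then conclude the un-indexed lemma by taking all \(n\). The argument is by induction on \(n\). The whole proof turns on one structural observation about the two semantics: the expression-level transition relation underlying \(\mapsto\) and \(\rightarrow\) is identical, and the only rule that touches resources is the one for \(\tickex{q_0}\), which subtracts a fixed cost \(q_0\) (feasible exactly when the available resources are at least \(q_0\)) while every other step leaves resources untouched.

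The crux is the following step-correspondence fact, which I would isolate first. For a non-value \(e\) that is reducible at \(q_p\), reducibility transfers upward to \(q_p + q_f\) (a larger budget only makes a \(\tickex{q_0}\) redex more affordable), and, conversely, any cost-paid step \(\singlestep{e}{q_p + q_f}{e''}{q''}\) descends to a cost-paid step \(\singlestep{e}{q_p}{e''}{q_p''}\) to the same reduct with \(q'' = q_p'' + q_f\), together with the pure step \(e \rightarrow e''\). Indeed, if the redex is \(\tickex{q_0}\) then reducibility at \(q_p\) forces \(q_0 \le q_p\), so the step is afforded at \(q_p\) with \(q_p'' = q_p - q_0\) and \(q'' = (q_p + q_f) - q_0 = q_p'' + q_f\); for every other step the target and resource change are budget-independent, so \(q_p'' = q_p\) and \(q'' = q_p + q_f = q_p'' + q_f\). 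The reducibility hypothesis is exactly what the cost-paid weakest precondition supplies for a non-value \(e\), which is why the combined bound must be built on top of the cost-paid relation rather than the cost-free one.

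With this in hand the induction is routine. The base case \(n = 0\) is immediate since \(\wpr{\costpaid}{\Phi}_0 \defd \top\). For \(n+1\), I split on whether \(e\) is a value. If it is, both hypotheses yield \(\Psi_p(e, q_p)\) and \(\Psi_f(e, q_f)\) directly, and witnessing the existential in \(\Phi\) with \(q_p' = q_p\), \(q_f' = q_f\) discharges the goal. If \(e\) is not a value, the cost-paid hypothesis provides a step, hence reducibility at \(q_p\), so by the observation \(e\) is reducible at \(q_p + q_f\); it remains to verify the successor clause. Given any \(\singlestep{e}{q_p + q_f}{e''}{q''}\), the step-correspondence fact produces a matching cost-paid step at \(q_p\) (giving \(\wpr{\costpaid}{\Psi_p}_n(e'', q_p'')\) with \(q'' = q_p'' + q_f\)) and the pure step \(e \rightarrow e''\) (giving \(\wpr{\costfree}{\Psi_f}_n(e'', q_f)\)); the induction hypothesis then yields \(\wpr{\costpaid}{\Phi}_n(e'', q_p'' + q_f) = \wpr{\costpaid}{\Phi}_n(e'', q'')\), as required. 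The tick case needs no special treatment: the split \(q'' = q_p'' + q_f\) is precisely the decomposition recorded by \(\Phi\).

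The main obstacle is getting the step-correspondence and its resource bookkeeping exactly right --- in particular, arguing that feasibility of a potentially costly step at the smaller budget \(q_p\) transfers to the larger budget \(q_p + q_f\) while the unspent \(q_f\) is carried along untouched by the cost-free side. Everything else is mechanical, but this step is where the asymmetry between the paid and free semantics, and the decision to phrase the conclusion as a cost-paid weakest precondition, genuinely matters.
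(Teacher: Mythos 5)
Your proof is correct and follows the same route the paper indicates: the paper's entire proof is ``By induction over steps and use of the definition of \(\rightarrow\),'' and your argument is precisely that induction on the step index, with the step-correspondence observation (only \(\mathsf{tick}\) touches resources, and \(e \rightarrow e'\) is by definition the resource-erased projection of \(\mapsto\)) supplying the needed transfer between the two budgets. The only implicit assumption worth flagging is determinism of the reduct, which you use to match the step at \(q_p + q_f\) with the step at \(q_p\); this holds for the paper's call-by-value semantics.
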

\begin{proof} By induction over steps and use of the definition of \(\rightarrow\). \end{proof}

The value relation is completely standard for all types but function types, so we will only detail arrows. As explained in \cref{sec:types}, an arrow type has form \(\funty{\tau_1}{\tau_2}{\Theta}{\Theta_{\costfree}}\) where \(\Theta, \Theta_{\costfree} \subseteq \mathcal{A}(\tau_1) \times \mathcal{A}(\prodty{\tau_1}{\tau_2})\), which are sets of specifications of the potential needed and returned by the function. Our interpretation thus intuitively states that for each annotation specification \((P, Q) \in \Theta\) (respectively \(\Theta_{\costfree}\)), for all resource states \(q\) and arguments \(v'\), assuming that \(v'\) is semantically well typed at \(\tau_1\), running the function with \(q\) resources augmented with the added potential that \(P\) mandates of \(v\), the result (if terminating) is semantically well typed at \(\tau_2\) and the remaining resources are at least \(q\) augmented with that which \(Q\) guarantees. Here the quantification over all smaller step-indices \(m\) and resources \(q\) acts as a Kripke-style quantification over future states.

\newcommand{\hassemty}[6]{{#1} ; {#2} \vDash_{#3} {#4} : {#5} ; {#6}}

With those relations under our belt, we define the semantic resource typing judgement:
\begin{definition}[Semantic resource typing]
\[\begin{array}{l}
  \hassemty{\Gamma}{P}{c}{e}{\tau}{Q} \defd \\
  \hspace{1.3em} \forall \gamma, n, q \geq 0.\; \mathcal{V}\interp{\Gamma}_n(\gamma) \Rightarrow \\
  \hspace{2em} \wpr{c}{\lambda (m, v, q').\; \mathcal{V}\interp{\tau}_m(v) \land \\
  \hspace{8.3em} q' \geq q + \Pot{\Gamma, \circ:\tau}{Q}{\gamma[\circ\mapsto v]}} \\
  \hspace{4.3em} (e, q + \Pot{\Gamma}{P}{\gamma})
\end{array}\]
That is, under cost model \(c\) and in annotated context \(\langle \Gamma; P \rangle\), expression \(e\) is semantically well-typed at \(\tau\) with remainder annotation \(Q\) iff for all semantically well-typed closing substitution contexts \(\gamma\), \(e\) never crashes and results in a semantically well-typed remainder value context with potential at least as much as according to \(Q\) if it terminates, assuming the run is started with at least as many resources as \(P\) says \(\gamma\) should.
\end{definition}

At this point, we wish to prove the fundamental theorem of logical relations:
\begin{lemma}[Fundamental theorem]
  If \(\hasty{\Gamma}{P}{c}{e}{\tau}{Q}\), then \(\hassemty{\Gamma}{P}{c}{e}{\tau}{Q}\).
\end{lemma}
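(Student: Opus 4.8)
The plan is to prove the fundamental theorem by a single induction on the derivation of $\hasty{\Gamma}{P}{c}{e}{\tau}{Q}$, exactly as advertised in the proof of \cref{thm:soundness}. In every case I would start by unfolding the definition of semantic resource typing: fix an arbitrary step index $n$, resources $q \geq 0$, and closing substitution $\gamma$ with $\mathcal{V}\interp{\Gamma}_n(\gamma)$, and reduce the goal to establishing the weakest precondition $\wpr{c}{\Psi}_n([\gamma]e,\, q + \Pot{\Gamma}{P}{\gamma})$ for the postcondition $\Psi$ dictated by $\tau$ and $Q$. Unfolding $\wpr{c}{\Psi}$ then splits each case according to whether $[\gamma]e$ is already a value --- in which case I discharge $\Psi$ directly --- or takes a step, in which case I apply the induction hypothesis to the residual expression at the decremented step index. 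All of the resulting arithmetic obligations on potential are discharged using the facts established earlier: linearity and order-monotonicity of $\Pot{\tau}{\cdot}{v}$, the product identity $\Pot{\tau_1 \times \tau_2}{P \otimes Q}{\pairex{v_1}{v_2}} = \Pot{\tau_1}{P}{v_1} \cdot \Pot{\tau_2}{Q}{v_2}$, the shift-preserves-potential and share-preserves-potential theorems, and the constant-indices-sum lemma.

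The value-forming rules are routine. For \textsc{T:Unit}, \textsc{T:InL}, \textsc{T:InR}, \textsc{T:Prod}, and \textsc{T:Fold}, the closed expression reduces immediately to a value and the only work is to verify the remainder-context potential bound; here the $\otimes$/$\annotshare$/$\shift$ expressions in each rule's premise translate \emph{verbatim} into potential identities, with \textsc{T:Fold} invoking the shift theorem to carry potential across $\mathsf{fold}$ and \textsc{T:Prod} and the sum rules invoking the share theorem and the pairing isomorphism to account for duplicating the result into the pseudovariable $\circ$. The elimination rules \textsc{T:CaseSum}, \textsc{T:CaseProd}, \textsc{T:Unfold}, and \textsc{T:Let} step into a subexpression and then appeal to the induction hypothesis, with the premise equations again matching the potential bookkeeping after substitution. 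The two \textsc{T:Tick} rules follow directly from the cost reduction for $\mathsf{tick}$ together with the constant-indices-sum lemma, and the structural rules \textsc{T:Weaken}, \textsc{T:Relax}, and \textsc{T:Augment} follow from monotonicity of $\Pot{\tau}{\cdot}{v}$ and the extension-order characterization of $\leq$.

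The genuinely nonstandard cases are the arrow rules, as anticipated by the fact that the arrow clause is the only delicate part of $\mathcal{V}$. For \textsc{T:Fun} I would show that the recursive function value lies in $\vallr{\funty{\tau_1}{\tau_2}{\Theta_{\costpaid}}{\Theta_{\costfree}}}$ by an inner induction on the step index: since the body is typed in a context that already contains $f$ at this same arrow type, the $\forall m \leq n$ Kripke quantification in the arrow clause lets me feed $f$ back into the value relation at a strictly smaller index before invoking the outer induction hypothesis (the fundamental theorem for the body) on the appropriately extended substitution. For \textsc{T:App}, I would first observe that $\gamma(f)$ is in the arrow relation and $\gamma(x)$ in $\vallr{\tau_1}$, step the application, and then split the annotation into its constant-index slices (which the rule assigns to $\Theta_{\costpaid}$) and its remaining mixed-potential slices (assigned to $\Theta_{\costfree}$). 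Each slice yields a weakest precondition from the arrow relation, and \cref{lem:wpcombine} (weakest-precondition glomming) combines the cost-paid and cost-free pieces into the single cost-paid weakest precondition the conclusion demands.

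The main obstacle will be the resource reindexing in \textsc{T:App}: showing that the per-index decomposition of $P$ and $Q$ via the projection operator $\proj{\Gamma \mapsto i}{\cdot}$ lines up exactly with the potentials required and returned by the chosen specifications, so that the cost-paid contributions over $\constidxs{\Gamma}$ (which carry unit total weight by the constant-indices-sum lemma) together with the cost-free contributions over the remaining indices reconstitute the full-context input and remainder potentials. Getting this index algebra to agree while simultaneously keeping the step-index decrements compatible with both the recursive-function case and the glomming lemma is where essentially all the difficulty concentrates; every other case is mechanical given the potential-preservation machinery.
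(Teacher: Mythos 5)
Your proposal is correct and follows essentially the same route as the paper's own (sketched) proof: a single induction on the typing derivation, with most cases reducing to potential-evaluation identities discharged by the shift, share, and constant-index lemmas, \textsc{T:Let} handled by following the execution of \(e_1\), \textsc{T:Relax} by order-monotonicity, and \textsc{T:App} by slicing the annotation into its constant-index (cost-paid) and remaining (cost-free) parts and recombining the resulting weakest preconditions via Lemma~\ref{lem:wpcombine}. Your treatment of \textsc{T:Fun} via the Kripke quantification over smaller step indices is the standard argument the paper leaves implicit, and your identified obstacle (the index algebra in \textsc{T:App}) matches where the paper concentrates its attention.
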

\begin{proof}[Proof sketch]
 Given the way we've set things up, this is actually quite trivial to prove for almost all rules, as the statement boils down to equality of potential evaluations that are easily discharged with theorems from \cref{sec:respolys} and \cref{sec:contextannotations}. For those rules where this is not true, we give sketches of proofs:
 \begin{itemize}
  \item \textsf{T:Let}. Inducts over the execution derivation for \(e_1\) and uses the evaluation context stepping rule for \(\mathsf{let}\).
  \item \textsf{T:Unfold}. The equality of potentials follows immediately; the only typing wrinkle comes from the ``later'' in the \(\mu\) type, but this is eliminated with the \(\unfoldex{(\foldex{v})}\) to \(v\) step.
  \item \textsf{T:App}. First picks the one cost-paid annotation in \(\Theta\) to be used, then inducts over the finite set of nonconstant indices in \(P\) with nonzero coefficients to create one big weakest precondition for the function application, using Lemma~\ref{lem:wpcombine}.
  \item \textsf{T:Relax}. Follows from annotation evaluation respecting the ordering on annotations.
 \end{itemize}
\end{proof}

Now we can state our adequacy result:
\begin{lemma}[Adequacy]
  Assume \(\hassemty{\cdot}{p}{\costpaid}{e}{\tau}{Q}\). For any \(r \geq 0\) and execution \(\multistep{e}{p + r}{e'}{q}\), we have either:
  \begin{itemize}
    \item \(e' \in \mathcal{V}(\tau)\) and \(q \geq r + \Pot{\tau}{Q}{e'}\), or
    \item there exists some \(e''\) and \(q'\) such that \(\singlestep{e'}{q}{e''}{q'}\).
  \end{itemize}
\end{lemma}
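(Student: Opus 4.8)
The plan is to reduce the operational statement to the semantic hypothesis by instantiating the definition of \(\vDash\) and then threading the weakest-precondition relation through the given execution.

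First I would specialize the hypothesis \(\hassemty{\cdot}{p}{\costpaid}{e}{\tau}{Q}\). Since the context is empty, the only closing substitution is the empty one \(\gamma\), the premise \(\mathcal{V}\interp{\cdot}_n(\gamma)\) holds trivially, and the empty-context potentials collapse to scalars: \(\Pot{\cdot}{p}{\gamma} = p\) and \(\Pot{\circ:\tau}{Q}{\gamma[\circ\mapsto v]} = \Pot{\tau}{Q}{v}\). Taking the universally quantified baseline resources to be \(r\) (legal because \(r \geq 0\)) and abbreviating the postcondition as \(\Psi(m,v,q') \defd \mathcal{V}\interp{\tau}_m(v) \land q' \geq r + \Pot{\tau}{Q}{v}\), the definition hands us \(\wpr{\costpaid}{\Psi}_n(e, p + r)\) for every \(n \geq 0\). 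Note that the second conjunct of \(\Psi\) is precisely the resource bound appearing in the first bullet of the conclusion, so the remaining work is purely about propagating \(\mathcal{WP}\) along the execution.

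The central step is an unrolling lemma: if \(\wpr{\costpaid}{\Psi}_{n+1}(e_0, q_0)\) and \(\singlestep{e_0}{q_0}{e_1}{q_1}\), then \(\wpr{\costpaid}{\Psi}_n(e_1, q_1)\). This is read off the definition of \(\mathcal{WP}\) at index \(n+1\): because \(e_0\) takes a step it is not a value (here I rely on value forms being irreducible in the operational semantics), so the surviving disjunct is the one asserting that every successor satisfies \(\wpr{\costpaid}{\Psi}_n\), which I apply to the specific step \(\singlestep{e_0}{q_0}{e_1}{q_1}\). The given derivation \(\multistep{e}{p+r}{e'}{q}\) has some finite length \(k\); starting from the instance \(\wpr{\costpaid}{\Psi}_{k+1}(e, p + r)\) obtained above (i.e.\ take \(n = k+1\)), I iterate the unrolling lemma by induction on \(k\), decrementing the index by one at each operational step, to arrive at \(\wpr{\costpaid}{\Psi}_1(e', q)\).

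Finally I would unfold \(\wpr{\costpaid}{\Psi}_1(e', q)\) once more. Since \(\wpr{\costpaid}{\Psi}_0\) holds vacuously, the definition at index \(1\) reduces to the disjunction \((e' \text{ val} \land \Psi(1,e',q)) \lor (\exists e'', q'.\, \singlestep{e'}{q}{e''}{q'})\). The right disjunct is exactly the second bullet of the conclusion. In the left disjunct, \(\Psi(1,e',q)\) supplies both \(\mathcal{V}\interp{\tau}_1(e')\) — whence \(e' \in \mathcal{V}(\tau)\) by the definition of the value relation \(\mathcal{V}\interp{\tau}_n \defd v \in \mathcal{V}(\tau) \land \mathcal{V'}\interp{\tau}_n\) — and the bound \(q \geq r + \Pot{\tau}{Q}{e'}\), giving the first bullet. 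I expect the only genuinely delicate point to be the index bookkeeping: one must choose the step index strictly above the execution length, so that the unrolling reaches \(e'\) at index \(1\) rather than bottoming out at the trivial \(\mathcal{WP}_0\) while still in flight; beyond that, the argument is a direct reading-off of the definitions, modulo the standard assumption that values do not step.
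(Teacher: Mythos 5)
Your proof is correct. The paper in fact states Adequacy without giving any proof, and your argument --- instantiating the semantic judgement at the empty substitution with baseline resources \(r\) and step index strictly exceeding the execution length, unrolling \(\mathcal{WP}_{\costpaid}\) one step at a time using the irreducibility of values, and reading the two bullets off the final disjunction at index \(1\) --- is precisely the standard argument the paper's development implicitly relies on, with the index bookkeeping handled correctly.
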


Theorem~\ref{thm:soundness} is a corollary of the above two lemmas.

\end{document}